\newcommand{\ems}[1]{\textcolor{blue}{#1}} 
\definecolor{oskar_green}{rgb}{0.0, 0.5, 0.0}
\newcommand{\OL}[1]{\textcolor{oskar_green}{[\textbf{OL: #1}]}}
\newcommand{\algSize}{small}
\newcommand{\bigO}{\mathcal{O}\xspace}
\newcommand{\remove}[1]{}
\newcommand{\txS}{\mathit{txObsS}\xspace}
\newcommand{\txSJ}{\mathit{txObsSJ}\xspace}
\newcommand{\rxS}{\mathit{rxObsS}\xspace}
\newcommand{\rxSJ}{\mathit{rxObsSJ}\xspace}
\newcommand{\maxS}{\mathsf{maxSeq}\xspace}
\newcommand{\obsolete}{\mathsf{obsolete}\xspace}
\newcommand{\minTxObstSeq}{\mathsf{minTxObsS}\xspace}
\newcommand{\msgSet}{\mathit{buffer}\xspace}
\newcommand{\etal}{\emph{et al.}\xspace}
\newcommand{\eg}{\emph{e.g.,}\xspace}
\newcommand{\ie}{\emph{i.e.,}\xspace}
\patchcmd{\thebibliography}{\chapter*}{\section*}{}{}
\newtheorem{remark}{Remark}[section]
\newtheorem{theorem}{Theorem}[section]
\newtheorem{lemma}[theorem]{Lemma}
\newtheorem{definition}{Definition}[section]
\newtheorem{claim}[theorem]{Claim}
\newcommand{\true}{\mathsf{True}\xspace}
\newcommand{\false}{\mathsf{False}\xspace}
\newenvironment{claimProof}[1]{\par\noindent\underline{Proof:}\space#1}{\hfill $\blacksquare$}
\newcommand{\sP}{\mathcal{P}\xspace}
\newcommand{\N}{\mathbb{N}\xspace}
\newcommand{\capacity}{\mathsf{channelCapacity}\xspace}
\newcommand{\buffCapacity}{\mathsf{bufferUnitSize}\xspace}
\newenvironment{proof}{\noindent \textbf{Proof.}}{\hfill$\blacksquare$}
\renewenvironment{claimProof}{\noindent\textbf{Proof of claim.}}{\hfill$\Box$}
\begin{document}
\setcounter{footnote}{2}
%
\title{Self-stabilizing Uniform Reliable Broadcast\\~\Large{(preliminary version)}}


\author{Oskar Lundstr\"om~\footnote{Department of Computer Science and Engineering, Chalmers University of Technology, Gothenburg, SE-412 96, Sweden, E-mail: \texttt{osklunds@student.chalmers.se}.} \and Michel Raynal~\footnote{Institut Universitaire de France IRISA, ISTIC Universit\'{e} de Rennes, Rennes Cedex, 35042, France,  and Department of Computing, Polytechnic University, Hong Kong, E-mail: \texttt{michel.raynal@irisa.fr}.} \and Elad M.\ Schiller~\footnote{Department of Computer Science and Engineering, Chalmers University of Technology, Gothenburg, SE-412 96, Sweden, E-mail: \texttt{elad@chalmers.se}.}}


\date{}

\maketitle

\begin{abstract}
	We study a well-known communication abstraction called \emph{Uniform Reliable Broadcast} (URB). URB is central in the design and implementation of fault-tolerant distributed systems, as many non-trivial fault-tolerant distributed applications require communication with provable guarantees on message deliveries. Our study focuses on fault-tolerant implementations for time-free message-passing systems that are prone to node-failures. Moreover, we aim at the design of an even more robust communication abstraction. We do so through the lenses of \emph{self-stabilization}---a very strong notion of fault-tolerance. In addition to node and communication failures, self-stabilizing algorithms can recover after the occurrence of \emph{arbitrary transient faults}; these faults represent any violation of the assumptions according to which the system was designed to operate (as long as the algorithm code stays intact).   
	
	This work proposes the first self-stabilizing URB solution for time-free message-passing systems that are prone to node-failures. The proposed algorithm has an $\bigO(\buffCapacity)$ stabilization time (in terms of asynchronous cycles) from arbitrary transient faults, where $\buffCapacity$ is a predefined constant that can be set according to the available memory. Moreover, the communication costs of our algorithm are similar to the ones of the non-self-stabilizing state-of-the-art. The main differences are that our proposal considers repeated gossiping of $\bigO(1)$ bits messages and deals with bounded space (which is a prerequisite for self-stabilization). Specifically, each node needs to store up to $\buffCapacity \cdot n$ records and each record is of size $\bigO(\nu+n \log n)$ bits, where $n$ is the number of nodes in the system and $\nu$ is the number of bits needed to encode a single URB instance.	 
\end{abstract}

\thispagestyle{empty}

\section{Introduction}
\label{sec:intro}
We propose a self-stabilizing implementation of a communication abstraction called \emph{Uniform Reliable Broadcast} (URB) for time-free message-passing systems whose nodes may fail-stop. 

\smallskip

\noindent \textbf{Context and Motivation.~~} 
Fault-tolerant distributed systems are known to be hard to design and verify. Such complex challenges can be facilitated by high-level communication primitives. These high-level primitives can be based on low-level ones, such as the one that allows nodes to send a message to only one other node at a time. When an algorithm wishes to broadcast message $m$ to all nodes, it can send $m$ individually to every other node. Note that if the sender fails during this broadcast, it can be the case that only some of the nodes have received $m$. Even in the presence of network-level support for broadcasting or multicasting, failures can cause similar inconsistencies. To the end of simplifying the design of fault-tolerant distributed algorithms, such inconsistencies need to be avoided.

The literature has a large number of examples that show how fault-tolerant broadcasts can significantly simplify the development of fault-tolerant distributed systems via State Machine Replication~\cite{DBLP:journals/cn/Lamport78,DBLP:journals/csur/Schneider90}, Atomic Commitment~\cite{DBLP:conf/hase/Raynal97}, Virtual Synchrony~\cite{DBLP:journals/spe/Birman99} and Set-Constrained Delivery Broadcast~\cite{DBLP:conf/icdcn/ImbsMPR18}, to name a few. The weakest variance, named \emph{Reliable Broadcast} (RB), lets all non-failing nodes agree on the set of delivered messages. Stronger RB variants specify additional requirements on the delivery order. Such requirements can simplify the design of fault-tolerant distributed consensus, which allows reaching, despite failures, a common decision based on distributed inputs. Consensus algorithms and RB are closely related problems~\cite{hadzilacos1994modular,DBLP:books/sp/Raynal18}, which have been studied for more than three decades. 


\smallskip

\noindent \textbf{Task description.~~} 
\emph{Uniform Reliable Broadcast} (URB) is a variance of the reliable broadcast problem, which requires that if a node delivers a message, then all non-failing nodes also deliver this message~\cite{hadzilacos1994modular}. The task specifications consider an operation for URB broadcasting of message $m$ and an event of URB delivery of message $m$. The requirements include URB-validity, \ie there is no spontaneous creation or alteration of URB messages, URB-integrity, \ie there is no duplication of URB messages, as well as URB-termination, \ie if the broadcasting node is non-faulty, or if at least one receiver URB-delivers a message, then all non-failing nodes URB-deliver that message. Note that the URB-termination property considers both faulty and non-faulty receivers. This is the reason why this type of reliable broadcast is named \emph{uniform}. 
This work considers a URB implementation that is \emph{quiescent} in the sense that every URB operation incurs a finite number of messages. Moreover, our implementation uses a bounded amount of local memory.

\smallskip

\noindent \textbf{Fault Model.~~} 
We consider a time-free (a.k.a asynchronous) message-passing system that has no guarantees on the communication delay. Moreover, there is no notion of global (or universal) clocks and we do not assume that the algorithm can explicitly access the local clock (or timeout mechanisms). Our fault model includes $(i)$ detectable fail-stop failures of nodes, and $(ii)$ communication failures, such as packet omission, duplication, and reordering. In addition to the failures captured in our model, we also aim to recover from \emph{arbitrary transient faults}, \ie any temporary violation of assumptions according to which the system and network were designed to operate, \eg the corruption of control variables, such as the program counter, packet payload, and operation indices, which are responsible for the correct operation of the studied system, or operational assumptions, such as that the network cannot be partitioned for long periods. Since the occurrence of these failures can be arbitrarily combined, we assume that these transient faults can alter the system state in unpredictable ways. In particular, when modeling the system, we assume that these violations bring the system to an arbitrary state from which a \emph{self-stabilizing algorithm} should recover the system. 

\smallskip

\noindent \textbf{Related Work.~~} 
The studied problem can be traced back to Hadzilacos and Toueg~\cite{hadzilacos1994modular} who consider asynchronous message-passing,  where nodes may fail. They solved several variants to the studied problem with respect to the delivery order, \eg FIFO (first in, first out), CO (causal order), and TO (total order). They also showed that TO-URB and consensus have the same computability power in the context above. Here we focus only on the basic version of URB. To the end of satisfying the quiescent property, we consider a more advanced model, see the remark in~\cite[Section~4.2.1]{DBLP:books/sp/Raynal18}. 
%
%
For a detailed presentation of existing non-self-stabilizing URB solutions and their applications, we refer the reader to~\cite{DBLP:conf/wdag/AguileraCT97,DBLP:books/sp/Raynal18}. (Due to the page limit, Section~\ref{sec:back} of the Appendix brings some of these details.)
%
%
We follow the design criteria of self-stabilization, which was proposed by Dijkstra~\cite{DBLP:journals/cacm/Dijkstra74} and detailed in~\cite{DBLP:books/mit/Dolev2000,DBLP:series/synthesis/2019Altisen}. Dela{\"{e}}t \etal~\cite{DBLP:journals/jpdc/DelaetDNT10} present a self-stabilizing algorithm for the propagation of information with feedback (PIF) that can be the basis for implementing a self-stabilizing URB. However, Dela{\"{e}}t \etal do not consider node failures~\cite[Section 6]{DBLP:journals/jpdc/DelaetDNT10}. To the best of our knowledge, there is no self-stabilizing algorithm that solves the studied problem for the studied fault-model.  

%
%
%

\smallskip

\noindent \textbf{Contributions.~~} We present an important module for dependable distributed systems:  a self-stabilizing algorithm for Uniform Reliable Broadcast (URB) for time-free message-passing systems that are prone to node failures. To the best of our knowledge, we are the first to provide a broad fault model that includes detectable fail-stop failures, communication failures, such as packet omission, duplication, and reordering as well as arbitrary transient faults. The latter can model any violation of the assumptions according to which the system was designed to operate (as long as the algorithm code stays intact). 

The stabilization time of the proposed solution is in $\bigO(\buffCapacity)$ (in terms of asynchronous cycles), where $\buffCapacity$ is a predefined constant that can be set according to the available local memory. Our solution uses only a bounded amount of space, which is a prerequisite for self-stabilization. Specifically, each node needs to store up to $\buffCapacity \cdot n$ records and each record is of size $\bigO(\nu+n \log n)$ bits, where $n$ is the number of nodes in the system and $\nu$ is the number of bits needed to encode a single URB instance. Moreover, the communication costs of our algorithm are similar to the ones of the non-self-stabilizing state-of-the-art. The main difference is that our proposal considers repeated gossiping of $\bigO(1)$ bits messages. 

\smallskip

\noindent \textbf{Organization.} We state our system settings in Section~\ref{sec:sys}. Section~\ref{sec:back} includes a brief overview of some of the earlier ideas that have led to the proposed solution. Our self-stabilizing algorithm is proposed in Section~\ref{sec:basic}; it considers unbounded counters. The correctness proof appears in Section~\ref{sec:qurbCorr}. We explain how to bound the counters of the proposed self-stabilizing algorithm in Section~\ref{sec:bounded}. We conclude in Section~\ref{sec:disc}.

%

\smallskip

\section{System settings}
\label{sec:sys}
We consider a time-free message-passing system that has no guarantees on the communication delay. Moreover, there is no notion of global (or universal) clocks and the algorithm cannot explicitly access the local clock (or timeout mechanisms). The system consists of a set, $\sP$, of $n$ crash-prone nodes (or processors) with unique identifiers. Any pair of nodes $p_i,p_j \in \sP$ have access to a bidirectional communication channel, $channel_{j,i}$, that, at any time, has at most $\capacity \in \N$ packets on transit from $p_j$ to $p_i$ (this assumption is due to a well-known impossibility~\cite[Chapter 3.2]{DBLP:books/mit/Dolev2000}). 

%
%
Our analysis considers the \emph{interleaving model}~\cite{DBLP:books/mit/Dolev2000}, in which the node's program is a sequence of \emph{(atomic) steps}. Each step starts with an internal computation and finishes with a single communication operation, \ie a message $send$ or $receive$. The \emph{state}, $s_i$, of node $p_i \in \sP$ includes all of $p_i$'s variables and $channel_{j,i}$. The term \emph{system state} (or configuration) refers to the tuple $c = (s_1, s_2, \cdots,  s_n)$. We define an \emph{execution (or run)} $R={c[0],a[0],c[1],a[1],\ldots}$ as an alternating sequence of system states $c[x]$ and steps $a[x]$, such that each $c[x+1]$, except for the starting one, $c[0]$, is obtained from $c[x]$ by $a[x]$'s execution. 

\subsection{Task specifications}
\label{sec:spec}
The set of \emph{legal executions} ($LE$) refers to all the executions in which the requirements of the task $T$ hold. In this work, $T_{\text{URB}}$ denotes the task of Uniform Reliable Broadcast (URB) and $LE_{\text{URB}}$ denotes the set of executions in which the system fulfills $T_{\text{URB}}$'s requirements, which Definition~\ref{def:URB} specifies. Definition~\ref{def:URB} considers the operation, $\mathsf{urbBroadcast}(m)$, and the event $\mathsf{urbDeliver}(m)$. When processor $p_i \in \sP$ URB-broadcasts message $m$, it does so by calling $\mathsf{urbBroadcast}(m)$. The specifications assume that every broadcasted message is unique, say, by associating a message identity, \ie the pair $(\mathit{sender}~\mathit{identifier},~ \mathit{sequence}~\mathit{number})$, where the sequence number is an (integer) index that is locally generated by the sender.

\begin{definition}[Uniform Reliable Broadcast~\cite{DBLP:books/sp/Raynal18}]
	\label{def:URB}
	Let $R$ be a system execution. We say that the system demonstrates in $R$ a construction of the URB communication abstraction if the validity, integrity, and termination requirements are satisfied.  
	\begin{itemize}
		\item \textbf{Validity.~~} Suppose that $p_i$ URB-delivers message $m$ in step $a_i \in R$ with $p_j$ as a sender. There is a step $a_j\in R$ that appears in $R$ before $a_i$ in which $p_j$ URB-broadcasts $m$.
		
		\item \textbf{Integrity.~~} $R$ includes at most one step in which processor $p_i$ URB-delivers message $m$.
		
		\item \textbf{Termination.~~} Suppose that a non-faulty $p_i$ takes a step in $R$ that URB-broadcasts or URB-delivers message $m$. Each non-faulty $p_j \in \sP$ URB-delivers $m$ during $R$.
		
	\end{itemize}
\end{definition}

The URB implementation considered in this paper also satisfies the quiescent property (in a self-stabilizing manner). Our implementation uses $\mathsf{MSG}$ and $\mathsf{MSGack}$ messages for conveying information added to the system via $\mathsf{urbBroadcast}$ operations. We say that execution $R$ satisfies the \emph{quiescent} property if every URB-broadcast message that was USB-delivered incurs a finite number of $\mathsf{MSG}$ and $\mathsf{MSGack}$ messages. 
We note that the quiescent property does not consider all the messages that the proposed solution uses. Specifically, we use $\mathsf{GOSSIP}$ messages of constant size that the algorithm sends repeatedly. We note that self-stabilizing systems can never stop sending messages, because if they did, it would not be possible for the system to recover from transient faults~\cite[Chapter 2.3]{DBLP:books/mit/Dolev2000}. 


\subsection{The Fault Model and Self-stabilization}
We model a failure occurrence as a step that the environment takes rather than the algorithm. 


\begin{figure*}[t!]
		\begin{center}
			\begin{\algSize}
		\begin{tabular}{llll}
	\cline{2-3}
	\multicolumn{1}{l|}{}                    & \multicolumn{2}{l|}{~~~~~~~~~~~~~~~~~~~~~~~~~~~~~~~~~~~\textbf{Frequency}}                                                                               \\ \hline
	\multicolumn{1}{|l|}{\textbf{Duration}}  & \multicolumn{1}{l|}{\textit{Rare}}                          & \multicolumn{1}{l|}{\textit{Not rare}}                     \\ \hline
	\multicolumn{1}{|l|}{}                   & \multicolumn{1}{l|}{Any violation of the assumptions according to}         & \multicolumn{1}{l|}{Packet failures: omissions,}        \\
	\multicolumn{1}{|l|}{\textit{Transient}}          & \multicolumn{1}{l|}{which the system operates (but the code stays}           & \multicolumn{1}{l|}{duplications, reordering} \\
	\multicolumn{1}{|l|}{\textit{}}          & \multicolumn{1}{l|}{intact). This can result in any state corruption.}    & \multicolumn{1}{l|}{(assuming fair communications).}      \\
	\hline
	\multicolumn{1}{|l|}{\textit{Permanent}} & \multicolumn{2}{l|}{~~~~~~~~~~~~~~~~~~~~~~~~~~Detectable fail-stop  failures.}                                                                               \\ \hline
	\vspace*{0.25em}
\end{tabular}
			\end{\algSize}

		\hspace*{-0.5em}\includegraphics[clip=true,scale=0.75]{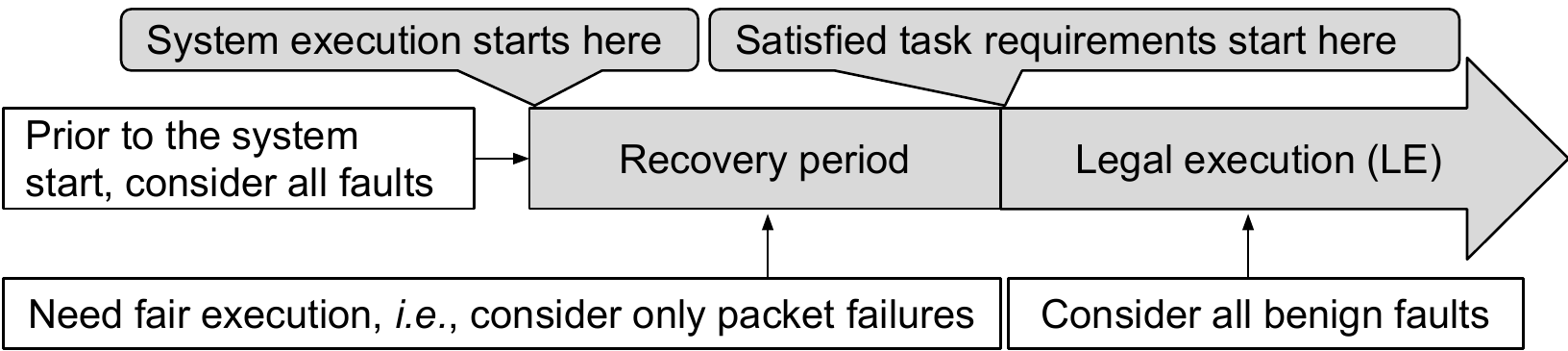}\\
		\end{center}
		\caption{\label{fig:self-stab-SDN}{The table above details our fault model and the chart illustrates when each fault set is relevant. The chart's gray shapes represent the system execution, and the white boxes specify the failures considered to be possible at different execution parts and recovery guarantees of the proposed self-stabilizing algorithm. The set of benign faults includes both packet failures and fail-stop failures.}}
\end{figure*}

\smallskip

\noindent \textbf{Benign failures.~~}
\label{sec:benignFailures}
When the occurrence of a failure cannot cause the system execution to lose legality, \ie to leave $LE$, we refer to that failure as a benign one (Figure~\ref{fig:self-stab-SDN}).

\smallskip

\noindent \textit{Node failure.~~}
We consider \emph{fail-stop failures}, in which nodes stop taking steps. We assume that there is a way to detect these failures, say, using unreliable failure detectors~\cite{DBLP:journals/jacm/ChandraT96}. 


\smallskip

\noindent \textit{Communication failures and fairness.~~}
We consider solutions that are oriented towards time-free message-passing systems and thus they are oblivious to the time in which the packets arrive and depart. We assume that the communication channels are prone to packet failures, such as omission, duplication, reordering. However, we assume that if $p_i$ sends a message infinitely often to $p_j$, node $p_j$ receives that message infinitely often. We refer to the latter as the \emph{fair communication} assumption. For example, the proposed algorithm sends infinitely often $\mathsf{GOSSIP}$ messages from any processor to any other. Despite the possible loss of messages, the communication fairness assumption implies that every processor receives infinitely often $\mathsf{GOSSIP}$ messages from any non-failing processor. 

\smallskip

\noindent \textbf{Arbitrary transient faults.~~}
We consider any violation of the assumptions according to which the system was designed to operate. We refer to these violations and deviations as \emph{arbitrary transient faults} and assume that they can corrupt the system state arbitrarily (while keeping the program code intact). The occurrence of an arbitrary transient fault is rare. Thus, our model assumes that the last arbitrary transient fault occurs before the system execution starts~\cite{DBLP:books/mit/Dolev2000}. Moreover, it leaves the system to start in an arbitrary state.

\smallskip

\noindent \textbf{Dijkstra's self-stabilization criterion}
\label{sec:Dijkstra}
An algorithm is \textit{self-stabilizing} with respect to the task of $LE$, when every (unbounded) execution $R$ of the algorithm reaches within a finite period a suffix $R_{legal} \in LE$ that is legal. That is, Dijkstra~\cite{DBLP:journals/cacm/Dijkstra74} requires that $\forall R:\exists R': R=R' \circ R_{legal} \land R_{legal} \in LE \land |R'| \in \N$, where the operator $\circ$ denotes that $R=R' \circ R''$ concatenates $R'$ with $R''$.

%
%

\subsection{Complexity Measures}
\label{sec:timeComplexity}
The main complexity measure of self-stabilizing algorithms, called \emph{stabilization time}, is the time it takes the system to recover after the occurrence of the last transient fault. 


\smallskip
\noindent
\textbf{Message round-trips and iterations of self-stabilizing algorithms.~~}
\label{sec:messageRoundtrips}
The correctness proof depends on the nodes' ability to exchange messages during the periods of recovery from transient faults. The proposed solution considers communications that follow the pattern of request-reply, \ie $\mathsf{MSG}$ and $\mathsf{MSGack}$ messages, as well as $\mathsf{GOSSIP}$ messages for which the algorithm does not send replies. The definitions of our complexity measures use the notion of a message round-trip for the cases of request-reply messages and the term algorithm iteration.

We give a detailed definition of \emph{round-trips} as follows. Let $p_i \in \sP$ and $p_j \in \sP \setminus \{p_i\}$. Suppose that immediately after system state $c$, node $p_i$ sends a message $m$ to $p_j$, for which $p_i$ awaits a reply. At system state $c'$, that follows $c$, node $p_j$ receives message $m$ and sends a reply message $r_m$ to $p_i$. Then, at system state $c''$, that follows $c'$, node $p_i$ receives $p_j$'s response, $r_m$. In this case, we say that $p_i$ has completed with $p_j$ a round-trip of message $m$. 

It is well-known that self-stabilizing algorithms cannot terminate their execution and stop sending messages~\cite[Chapter 2.3]{DBLP:books/mit/Dolev2000}. Moreover, their code includes a do forever loop. Thus, we define a \emph{complete iteration} of a self-stabilizing algorithm. Let $N_i$ be the set of nodes with whom $p_i$ completes a message round trip infinitely often in execution $R$. Moreover, assume that node $p_i$ sends a gossip message infinitely often to $p_j \in \sP \setminus \{p_i\}$ (regardless of the message payload). Suppose that immediately after the state $c_{begin}$, node $p_i$ takes a step that includes the execution of the first line of the do forever loop, and immediately after system state $c_{end}$, it holds that: (i) $p_i$ has completed the iteration it has started immediately after $c_{begin}$ (regardless of whether it enters branches), (ii) every request-reply message $m$ that $p_i$ has sent to any node $p_j \in \sP$ during the iteration (that has started immediately after $c_{begin}$) has completed its round trip, and (iii) it includes the arrival of at least one gossip message from $p_i$ to any non-failing $p_j \in \sP \setminus \{p_i\}$. In this case, we say that $p_i$'s complete iteration (with round-trips) starts at $c_{begin}$ and ends at $c_{end}$.

\smallskip
\noindent
\textbf{Cost measures: asynchronous cycles and the happened-before relation.~~}
\label{ss:asynchronousCycles}
We say that a system execution is \emph{fair} when every step that is applicable infinitely often is executed infinitely often and fair communication is kept. Since asynchronous systems do not consider the notion of time, we use the term (asynchronous) cycles as an alternative way to measure the period between two system states in a fair execution. The first (asynchronous) cycle (with round-trips) of a fair execution $R=R' \circ R''$ is the shortest prefix $R'$ of $R$, such that each non-failing node executes at least one complete iteration in $R'$. The second cycle in execution $R$ is the first cycle in execution $R''$, and so on. 

\begin{remark}
	\label{ss:first asynchronous cycles}
	For the sake of simple presentation of the correctness proof, when considering fair executions, we assume that any message that arrives in $R$ without being transmitted in $R$ does so within $\bigO(1)$ asynchronous cycles in $R$. 
\end{remark}

\begin{remark}[Absence of transient faults implies no need for fairness assumptions]
	\label{ss:noFairnessIsNEeeded}
	In the absence of transient faults, no fairness assumptions are required in any practical settings. Also, the existing non-self-stabilizing solutions (Section~\ref{sec:back}) do not make any fairness assumption, but they do not consider recovery from arbitrary transient fault regardless of whether the execution eventually becomes fair or not.
\end{remark}

Lamport~\cite{DBLP:journals/cacm/Lamport78} defined the happened-before relation as the least strict partial order on events for which: (i) If steps $a, b \in R$ are taken by processor $p_i \in \sP$, $a \rightarrow b$ if $a$ appears in $R$ before $b$. (ii) If step $a$ includes sending a message $m$ that step $b$ receives, then $a \rightarrow b$. Using the happened-before definition, one can create a directed acyclic (possibly infinite) graph $G_R:(V_R,E_R)$, where the set of nodes, $V_R$, represents the set of system states in $R$. Moreover, the set of edges, $E_R$, is given by the happened-before relation. In this paper, we assume that the weight of an edge that is due to cases (i) and (ii) are zero and one, respectively. When there is no guarantee that execution $R$ is fair, we consider the weight of the heaviest directed path between two system state $c,c' \in R$ as the cost measure between $c$ and $c'$.      

\subsection{External building-blocks: self-stabilizing unreliable failure detectors}
\label{sec:ext}
The concepts of failure patterns and failure detectors have been introduced in~\cite{DBLP:journals/jacm/ChandraT96}. The failure detector $\Theta$ was introduced in~\cite{DBLP:conf/wdag/AguileraTD99}, and the failure detector $\mathit{HB}$ (heartbeat) has been introduced in~\cite{DBLP:journals/siamcomp/AguileraCT00}. A pedagogical presentation of these failure detectors is given in~\cite{DBLP:books/sp/Raynal18}. 

Any execution $R:=(c[0],a[0],c[1],a[1],\ldots)$ can have any number of failures during its run. $R$'s failure pattern is a function $F:\mathbb{Z}^+ \rightarrow 2^\sP$, where $\mathbb{Z}^+$ refers to an index of a system state in $R$, which in some sense represents (progress over) time, and $2^\sP$ is the power-set of $\sP$, which represent the set of failing nodes in a given system state. $F(\tau)$ denotes the set of failing nodes in system state $c_{\tau}\in R$. Since we consider fail-stop failures, $F(\tau) \subseteq F(\tau + 1)$ holds for any $\tau \in \mathbb{Z}^+$. Denote by $\mathit{Faulty}(F)\subseteq \sP$ the set of nodes that eventually fail-stop in the (unbounded) execution $R$, which has the failure pattern $F$. Moreover, $\mathit{Correct}(F)=\sP \setminus \mathit{Faulty}(F)$.

We assume the availability of self-stabilizing $\Theta$ failure detectors~\cite{DBLP:journals/siamcomp/AguileraCT00}, which offer local access to $\mathit{trusted}$, which is a set that satisfies the $\Theta$-accuracy and $\Theta$-liveness properties. Let $\mathit{trusted}^\tau_i$ denote $p_i$'s value of $\mathit{trusted}$ at time $\tau$. $\Theta$-accuracy is specified as $\forall p_i \in \sP: \forall \tau \in \mathbb{Z}^+:(\mathit{trusted}^\tau_i\cap \mathit{Correct}(F))\neq \emptyset$, \ie at any time, $\mathit{trusted}_i$ includes at least one non-faulty node, which may change over time. $\Theta$-liveness is specified as $\exists \tau \in \mathbb{N}: \forall \tau' \geq \tau : \forall p_i \in \mathit{Correct}(F):\mathit{trusted}^{\tau'}_i\subseteq \mathit{Correct}(F)$, \ie eventually $\mathit{trusted}_i$ includes only non-faulty nodes.  A self-stabilizing $\Theta$-failure detector appears in~\cite{DBLP:conf/netys/BlanchardDBD14}.

We also assume the availability of a class $\mathit{HB}$ (heartbeat) self-stabilizing failure detector~\cite{DBLP:journals/siamcomp/AguileraCT00}, which has the $\mathit{HB}$-completeness and $\mathit{HB}$-liveness properties. Let $\mathit{HB}_i^\tau[j]$ be $p_i$'s value of the $j$-th entry in the array $\mathit{HB}$ at time $\tau$. $\mathit{HB}$-completeness is specified as $\forall p_i \in \mathit{Correct}(F), \forall p_j \in \mathit{Faulty}(F): \exists K: \forall \tau \in \mathbb{N}: \mathit{HB}_i^\tau[j] < K$, \ie any faulty node is eventually suspected by every non-failing  node. $\mathit{HB}$-liveness is specified as (1) $\forall p_i, p_j \in \sP: \forall \tau \in \mathbb{N}: \mathit{HB}_i^\tau[j] \leq \mathit{HB}_i^{\tau+1}[j]$, and (2) $\forall p_i,p_j \in \mathit{Correct}(F): \forall K: \exists \tau \in \mathbb{Z}^+:\mathit{HB}_i^\tau[j] > K$. In other words, there is a time after which only the faulty nodes are suspected. The implementation of the $\mathit{HB}$ failure detector that appears in~\cite{DBLP:conf/wdag/AguileraCT97} and~\cite[Chapter 3.5]{DBLP:books/sp/Raynal18} uses unbounded counters. A self-stabilizing variation of this mechanism can simply let $p_i \in \sP$ to send $\mathsf{HEARTBEAT}(\mathit{HB}_i[i], \mathit{HB}_i[j])$ messages to all $p_j \in \sP$ periodically while incrementing the value of $\mathit{HB}_i[i]$. Once $p_j$ receives a heartbeat message from $p_i$, it updates the $i$-th and the $j$-th entries in $\mathit{HB}_j$, \ie it takes the maximum of the locally stored and received entries. Moreover, once any entry reaches the value of the maximum integer, $\mathit{MAXINT}$, a global reset procedure is used (see Section~\ref{sec:bounded}).       

\begin{remark}
	\label{ss:FD asynchronous cycles}
	For the sake of simple presentation of the correctness proof, during fair executions, we assume that $c_{\tau}\in R$ is reached within $\bigO(1)$ asynchronous cycles, such that $\forall_{p_i \in \mathit{Correct}(F)}:\mathit{trusted}^{\tau}_i\subseteq \mathit{Correct}(F)$ and for a given $K$, $\forall p_i,p_j \in \mathit{Correct}(F): \mathit{HB}_i^\tau[j] > K$, where $\tau \in \mathbb{Z}^+$ is determined by the $\Theta$- and $\mathit{HB}$-liveness properties. 
\end{remark}

\section{Non-self-stabilizing URB with and without Failure Detectors}
\label{sec:back}
For the completeness' sake, we briefly review existing URB solutions. The following algorithms are from \cite{DBLP:journals/siamcomp/AguileraCT00,DBLP:conf/wdag/AguileraTD99}. We follow here their description as give in~\cite{DBLP:books/sp/Raynal18} by starting from the simplest model before considering more advanced ones. 

\begin{algorithm*}[h!]
	\begin{\algSize}
		
		\smallskip
		
		\textbf{operation} $\mathsf{urbBroadcast}(m)$ \textbf{do} \textbf{send} $\mathsf{MSG}(m)$ \textbf{to} $p_i$\label{ln:urbRealChannel:broadcast}\;
		
		\smallskip
		
		\textbf{upon} $\mathsf{MSG}(m)$ \textbf{arrival from} $p_k$ \Begin{ 
			\If{first reception of $m$\label{ln:urbRealChannel:arrival}}{
				\{\lForEach{$p_j \in \sP \setminus \{p_i,p_k\}$}{\textbf{send} $\mathsf{MSG}(m)$ \textbf{to} $p_j$\label{ln:urbRealChannel:forward}\}; $\mathsf{urbDeliver}(m)$\label{ln:urbRealChannel:deliver}}
			}
		}
		
		\caption{\label{alg:urbRealChannel}URB in the presence of reliable communications;  code for $p_i\in\sP$}	
		
	\end{\algSize}
	
\end{algorithm*}

In the absence of communication and node failures, one can implement the $\mathsf{urbBroadcast}(m)$ operation by running \{\textnormal{foreach} $p_j \in \sP$ \textnormal{send} $\mathsf{MSG}(m)$ \textnormal{to} $p_j$\} and calling  $\mathsf{urbDeliver}(m)$ upon $p_j$'s reception of $m$. 
Algorithm~\ref{alg:urbRealChannel} considers a model in which nodes can fail-stop without the possibility to detect it, but with reliable communications. Node $p_i$ broadcasts message $m$ by sending $\mathsf{MSG}(m)$ to itself (line~\ref{ln:urbRealChannel:broadcast}). Upon the message arrival (line~\ref{ln:urbRealChannel:arrival}), the receiver ignores the message if it got it before. This is possible due to the requirement of unique message identities (Definition~\ref{sec:spec}). If it is the first reception, the receiver propagates $\mathsf{MSG}(m)$ to all other nodes (except itself and the sender) before calling  $\mathsf{urbDeliver}(m)$ (lines~\ref{ln:urbRealChannel:forward} to~\ref{ln:urbRealChannel:deliver}).


Algorithm~\ref{alg:urbUnRealChannel} considers a system in which at most $t < n/2$ nodes may crash without the possibility for detection as well as unreliable communications. Node $p_i$ broadcasts message $m$ by sending $\mathsf{MSG}(m)$ to itself (line~\ref{ln:urbRealChannel:broadcast2}) while assuming it has a reliable channel to itself). Upon the reception of $\mathsf{MSG}(m)$ for the first time (line~\ref{ln:notfirst2}), $p_i$ creates the set $\mathit{recBy}[m]=\{i,k\}$ to contain the identities of nodes that receive $\mathsf{MSG}(m)$, before activating the $\mathit{Diffuse}(m)$ task. In case this is not $\mathsf{MSG}(m)$'s first arrival (line~\ref{ln:allocate2}),  $p_i$ merely adds the sender identity, $k$, to $\mathit{recBy}[m]$. The task $\mathit{Diffuse}(m)$ is responsible for transmitting (and retransmitting) $\mathsf{MSG}(m)$ to at least a majority of the nodes before URB-delivering $m$ (lines~\ref{ln:send2} to~\ref{ln:urbDeliver2}).

\begin{algorithm*}[h!]
	\begin{\algSize}
		
		\smallskip
		
		\textbf{operation} $\mathsf{urbBroadcast}(m)$ \textbf{do} \textbf{send} $\mathsf{MSG}(m)$ \textbf{to} $p_i$\label{ln:urbRealChannel:broadcast2}\;
		
		\smallskip
		
		\textbf{upon} $\mathsf{MSG}(m)$ \textbf{arrival from} $p_k$ \Begin{ 
			\lIf{not the first reception of $m$}{$\mathit{recBy}[m] \gets \mathit{recBy}[m] \cup \{k\}$\label{ln:notfirst2}}
			\lElse{\textbf{allocate} $\mathit{recBy}[m]$; $\mathit{recBy}[m] \gets \{i,k\}$;
				\textbf{activate} $\mathit{Diffuse}(m)$ \textbf{task}\label{ln:allocate2}}
		}
		
		\smallskip
		
		\textbf{do forever} \Begin{

			\ForEach{\emph{\textbf{active}} $\mathit{Diffuse}(m)$ \textbf{task}}{
				\lForEach{$p_j \in \sP : j \notin \mathit{recBy}[m]$}{\textbf{send} $\mathsf{MSG}(j, seq)$ \textbf{to} $p_j$\label{ln:send2}}
				\lIf{$|\mathit{recBy}[m]| \geq t + 1) \land (p_i \text{ has not yet URB-delivered } m)$}{$\mathsf{urbDeliver}(m)$\label{ln:urbDeliver2}}
			}
			
		}
		
		\caption{\label{alg:urbUnRealChannel}URB in the presence of $t<n/2$ undetectable node failures; $p_i$'s code}	
		
	\end{\algSize}
	
\end{algorithm*}

Note that the task $\mathit{Diffuse}(m)$ never stops transmitting messages. Using $\Theta$ failure detectors (Section~\ref{sec:spec}), Algorithm~\ref{alg:urbRealChannelMore} avoids such an infinite number of retransmissions by enriching Algorithm~\ref{alg:urbUnRealChannel} as follows. (i) The URB-delivery condition, $\mathit{trusted} \subseteq \mathit{recBy}[m]$, of Algorithm~\ref{alg:urbRealChannelMore}'s line~\ref{ln:urbDeliver3} substitutes the condition, $|\mathit{recBy}[m]| \geq t + 1)$, of Algorithm~\ref{alg:urbUnRealChannel}'s line~\ref{ln:urbDeliver2}. (ii) Upon the reception of a $\mathsf{MSG}(m)$ message, $p_i$ acknowledges the reception via a $\mathsf{MSGack}(m)$. Moreover, when $p_i$ receives $\mathsf{MSGack}(m)$ from $p_k$, it marks the fact that $p_k$ received $m$ by adding $k$ to $\mathit{recBy}[m]$. (iii) Node $p_i$ can eventually avoid sending $\mathsf{MSG}(m)$ messages to a faulty processor $p_j$ in the following manner. Processor $p_i$ repeatedly transmits $\mathsf{MSG}(m)$ to $p_j$ as long as $p_j$ is trusted and $j \notin \mathit{recBy}[m]$ (line~\ref{ln:condSend3}). Note that, eventually, either $p_j$ will receive $\mathsf{MSG}(m)$ and acknowledge it to $p_i$, or in case $p_j$ is faulty, $j \notin \mathit{trusted}_i$ due to the $\Theta$-completeness property. Moreover, due to the strong $\Theta$-accuracy, $j \notin \mathit{trusted}_i$ cannot hold before $p_j$ fails (if it is faulty).

\begin{algorithm*}[h!]
	\begin{\algSize}
		
		\smallskip
		
		\textbf{operation} $\mathsf{urbBroadcast}(m)$ \textbf{do} \textbf{send} $\mathsf{MSG}(m)$ \textbf{to} $p_i$\;
		
		\smallskip
		
		\textbf{upon} $\mathsf{MSG}(m)$ \textbf{arrival from} $p_k$ \Begin{ 
			\lIf{not the first reception of $m$}{$\mathit{recBy}[m] \gets \mathit{recBy}[m] \cup \{k\}$}
			\lElse{\textbf{allocate} $\mathit{recBy}[m]$; $\mathit{recBy}[m] \gets \{i,k\}$;
				\textbf{activate} $\mathit{Diffuse}(m)$ \textbf{task}}
			\textbf{send} $\mathsf{MSGack}(m)$ \textbf{to} $p_k$\;
		}
		
		\smallskip
		
		\textbf{upon} $\mathsf{MSGack}(m)$ \textbf{arrival from} $p_k$ \textbf{do} \{$\mathit{recBy}[m] \gets \mathit{recBy}[m] \cup \{k\}$\}
		
		\smallskip
		
		\textbf{do forever} \Begin{

			\ForEach{\emph{\textbf{active}} $\mathit{Diffuse}(m)$ \emph{\textbf{task}}}{
				\lForEach{$j \in \mathit{trusted} \setminus \mathit{recBy}[m]$}{\textbf{send} $\mathsf{MSG}(m)$ \textbf{to} $p_j$\label{ln:condSend3}}
				\lIf{$\mathit{trusted} \subseteq \mathit{recBy}[m] \land (p_i \text{ has not yet URB-delivered } m)$}{$\mathsf{urbDeliver}(m)$\label{ln:urbDeliver3}}
			}
			
		}
		
		\caption{\label{alg:urbRealChannelMore}Quiescent URB using  $\Theta$-failure detectors; code for $p_i \in \sP$}	
		
	\end{\algSize}
	
\end{algorithm*}


To the end of allowing the implementation of a quiescent URB solution and the unreliable failure detectors that it relies on, the underlying system needs to satisfy synchrony assumptions that can be captured by the combined use of the $\Theta$- and $\mathit{HB}$-failure detectors~\cite[Chapter 3.5]{DBLP:books/sp/Raynal18}. Algorithm~\ref{alg:urbRealChannelSo} differs from Algorithm~\ref{alg:urbRealChannelMore} only in the $\mathit{Diffuse}(m)$ task (line~\ref{ln:snedOnly}). Specifically, $p_i$ transmits $\mathsf{MSG}(m)$ to $p_j$ only when $j \in \mathit{recBy}[m]$ (because from $p_i$'s perceptive, $p_j$ has not yet received $\mathsf{MSG}(m)$) and $\mathit{HB}[j]$ has increased since the previous iteration (because from $p_i$’s perspective, $p_j$ is not failing). Algorithm~\ref{alg:urbRealChannelSo} is the basis for our proposal (Section~\ref{sec:basic}).

\begin{algorithm*}[h!]
	\begin{\algSize}
		
		\smallskip
		
		\textbf{operation} $\mathsf{urbBroadcast}(m)$ \textbf{do} \textbf{send} $\mathsf{MSG}(m)$ \textbf{to} $p_i$\;
		
		\smallskip
		
		\textbf{upon} $\mathsf{MSG}(m)$ \textbf{arrival from} $p_k$ \Begin{ 
			\lIf{not the first reception of $m$}{$\mathit{recBy}[m] \gets \mathit{recBy}[m] \cup \{k\}$}
			\lElse{\textbf{allocate} $\mathit{recBy}[m]$; $\mathit{recBy}[m] \gets \{i,k\}$;
				\textbf{activate} $\mathit{Diffuse}(m, [\text{-}1,\ldots,\text{-}1])$ \textbf{task}}
			\textbf{send} $\mathsf{MSGack}(m)$ \textbf{to} $p_k$\;
		}
		
		\smallskip
		
		\textbf{upon} $\mathsf{MSGack}(m)$ \textbf{arrival from} $p_k$ \textbf{do} \{$\mathit{recBy}[m] \gets \mathit{recBy}[m] \cup \{k\}$\}
		
		\smallskip
		
		\textbf{do forever} \Begin{

			\ForEach{\emph{\textbf{active}} $\mathit{Diffuse}(m, \mathit{prevHB})$ \emph{\textbf{task}}}{
				\textbf{let} $\mathit{curHB}:=\mathit{HB}$\;
				\ForEach{$j \in \mathit{trusted} \setminus \mathit{recBy}[m] \land \mathit{prevHB}[m][j] <\mathit{curHB}[m][j] $}{\textbf{send} $\mathsf{MSG}(m)$ \textbf{to} $p_j$\label{ln:snedOnly}}
				$\mathit{prevHB}[m] \gets \mathit{curHB}[m]$\;
				\lIf{$\mathit{trusted} \subseteq \mathit{recBy}[m] \land (p_i \text{ has not yet URB-delivered } m)$}{$\mathsf{urbDeliver}(m)$}
			}
			
		}
		
		\smallskip
		
		\caption{\label{alg:urbRealChannelSo}Quiescent URB using  $\Theta$- and $\mathit{HB}$-failure detectors; code for $p_i \in \sP$}	
		
	\end{\algSize}
	
\end{algorithm*}


\section{Unbounded Self-stabilizing Uniform Reliable Broadcast}
\label{sec:basic}
Algorithm~\ref{alg:qurb} allows $p_i \in \sP$ to $\mathsf{urbBroadcast}$ message $m$ in a way the guarantees that all non-failing nodes raise the event $\mathsf{urbDeliver}(m)$ according to the specifications (Section~\ref{sec:spec}). The review in Section~\ref{sec:back} can help the reader to understand the proposed solution. We note that the boxed code lines of Algorithm~\ref{alg:qurb} are relevant only for an extension, which we discuss in Section~\ref{sec:fifo}.

%

\smallskip
\noindent
\textbf{Local variables and their purpose (lines~\ref{ln:varCap} to~\ref{ln:varTxS}).~~}
The task specifications assume that each processor $p_i \in \sP$ can URB-broadcast unique messages. To that end, Algorithm~\ref{alg:qurb} maintains the message index number, $seq$, that it increments upon $\mathsf{urbBroadcast}$ invocations.

The processors store all the currently processed messages as records in the variable $\mathit{buffer}$. Each record includes the following fields: (i)  $msg$, which holds the URB message, (ii) $id$, which is the identifier of the node that invoked the URB-broadcast, (iii) $seq$, which is the message index number, (iv) $\mathit{delivered}$, which is a Boolean that holds $\false$ only when the message is pending delivery, (v) $\mathit{recBy}$, which is a set that includes the identifiers of nodes that have acknowledged $msg$, and (vi) $\mathit{prevHB}$, which is a value of the $\mathit{HB}$ failure detector (Section~\ref{sec:spec}) that Algorithm~\ref{alg:qurb} uses for deciding when to transmit (and re-transmit) $msg$. Our proof shows that every node store at most $n \cdot \buffCapacity$ records, where $\buffCapacity$ can be set according to the available local memory. When accessing records in $\mathit{buffer}$, we use a query-oriented notation, \eg $(\bullet,id=j,seq=s,\bullet) \in \msgSet$ considers all buffered records that their $id$ and $seq$ fields hold the values $j$ and $s$, respectively.

\begin{figure*}[t!]
	\begin{\algSize}
		\centering
		\hspace*{-0.5em}\includegraphics[clip=true,scale=0.9]{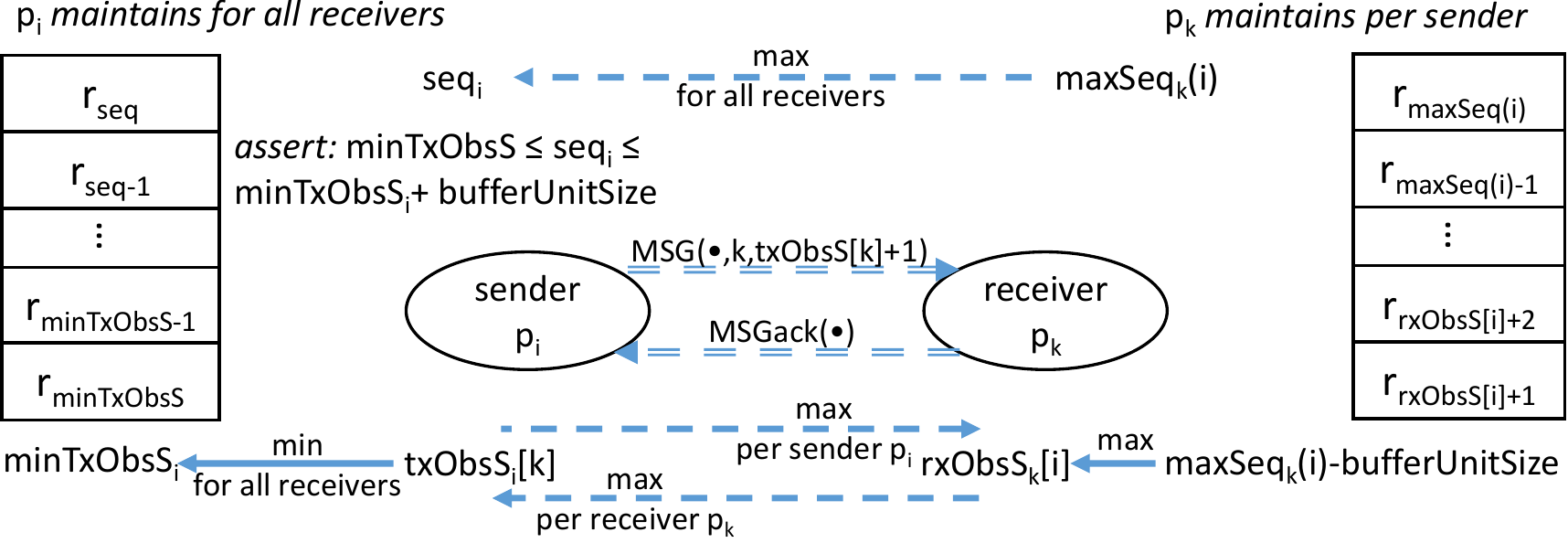}\\
		\caption{\label{fig:flowControl}{The self-stabilizing flow-control scheme between sender $p_i$ and receiver $p_k$. The arrays on the figure sides represent the portion of peers' $\msgSet$ variables that includes records $r_s$, where $s$ is a sequence number of a message sent from $p_i$ to $p_k$. The single-line arrows (dashed or not) and the text next to them represent a logical update, \eg $x \xleftarrow{\text{max}} y$ stands for $x \gets \max \{x,y\}$. The text that appears below the arrow clarify whether a single variable aggregates these update or different entries in the array store the updated values. The dashed arrows refer to updates that require communication between $p_i$ and $p_k$. The double-line arrows and the text above them depict $\mathsf{MSG}$ and $\mathsf{MSGack}$ messages.}}
	\end{\algSize}
\end{figure*}

\smallskip
\noindent
\textbf{A self-stabilizing flow-control scheme for bounding $\msgSet$.~~}
Algorithm~\ref{alg:qurb} bounds $\msgSet$ using a flow-control technique. We say a record, with sequence number $s$, is \emph{obsolete} if it had received acknowledgments from all trusted nodes and then it was URB-delivered. Moreover, since $p_i$ needs to remove obsolete records from its buffer, we also define that any record with a sequence number lower than $s$ to be also obsolete. This way, $p_k$ can keep track of all the obsolete records it has deleted using a single counter $\rxS_k[i]$, per sender $p_i$, which stores the highest sequence number of records that $p_i$ considers to be obsolete. The array $\txS_i[]$ facilitates the ability of sender ($p_i$) to control its sending flow since it can receive $\rxS_k[i]$ from $p_k$ and store it at $\txS_i[k]$. (Note that we denote variable $X$'s value at node $p_i$ by $X_i$.) The flow-control mechanism can simply defer the processing a new URB-message when $p_k$'s message sequence number minus the minimum value stored in $\txS[]$ (that arrived from a node that $p_k$ trusts) is smaller than the maximum number of records, $\buffCapacity$, that a receiver can buffer for each node.

We use Figure~\ref{fig:flowControl} to describe in detail the flow-control scheme. The receiver $p_k$ repeatedly sends to the sender $p_i$ the maximum $p_i$'s sequence number, $\maxS_k(i)$, that it stores in its buffer, see the top dashed left arrow. This allows $p_i$ to make sure that $seq_i$ is greater than any sequence number in the system that is associated with $p_i$, as we show in Theorem~\ref{thm:recovery}'s Argument (3). The buffer of $p_k$ cannot store more than $\buffCapacity$ with messages from $p_i$. Therefore, $p_k$ stores only messages that their sequence numbers are between $\maxS_k(i)$ and $\maxS_k(i)-\buffCapacity$ and reports to $p_i$ the highest sequence number, $\rxS_k[i]$, of its obsolete records that are associated with $p_i$, see the lowest dashed arrow. The latter stores this value in $\rxS_i[k]$ and makes sure it has the latest value from $p_k$ by sharing $\rxS_i[k]$ with it. The sender $p_i$ also uses $\rxS_i[k]$ for bounding $\msgSet_i$. Specifically, $\minTxObstSeq_i()$ aggregates the minimum value in $\rxS_i[k]$ for any trusted receiver $p_k$ (line~\ref{ln:minTxObstSeq}). Using $\minTxObstSeq_i()$, the sender $p_i$ can assert that $\minTxObstSeq_i() \leq seq_i \leq \minTxObstSeq_i()+\buffCapacity$ and $\msgSet_i$ includes all the records that their sequence numbers are between $\minTxObstSeq_i()$ and $seq_i$ (line~\ref{ln:bufferMinTxObstSeq}). Since, due to a transient fault, $p_i$'s might indicate the reception of acknowledgment for a message that $p_k$'s state shows that it has never received, $p_i$ repeatedly resends the message that has the sequence number $s$, such that $s=\rxS_i[k]+1$ (line\ref{ln:sendMSGForEach}), see the double line arrows between $p_i$ and $p_k$ in Figure~\ref{fig:flowControl}.


\begin{algorithm*}[t!]
	\begin{\algSize}
		
		\smallskip
		
		\noindent \textbf{global constants:}
		$\buffCapacity$\label{ln:varCap}\tcc*{max records per node in $\msgSet$}
		
		\smallskip
		
		\noindent \textbf{local variables:} (Initialization is optional in the context of self-stabilization.)\\
		$seq:=0$\label{ln:varSeq}\tcc*{message index num.}
		$\msgSet:=\emptyset$\label{ln:varBuffer}\tcc*{set of $(msg,id,seq,\mathit{delivered},\mathit{recBy},\mathit{prevHB})$ records}
		$\rxS[1..n]:=[0,\ldots,0]$\label{ln:varRxS}\tcc*{highest reciver's obsolete seq per node} 
		$\txS[1..n]:=[0,\ldots,0]$\label{ln:varTxS}\tcc*{highest sender's obsolete seq per node} 
		\fbox{$next[1..n]:=[1,\ldots,1]$}\label{ln:varNext}\tcc*{next-to-deliver message indices; one entry per sender}
		
		\smallskip
		
		
		\textbf{interface required} $\mathit{trusted}$ and $\mathit{HB}$\tcc*{see Sec.~\ref{sec:spec}} 
		
%
		
		\smallskip
		
		\textbf{macro} $\obsolete(r) := (\rxS[r.id] +1= r.seq \land \mathit{trusted} \subseteq r.\mathit{recBy} \land r.\mathit{delivered})$\label{ln:obsolete}\;
		
		\smallskip
		
		\textbf{macro} $\maxS(k) := \max (\{ s: (\bullet, id=k,seq=s,\bullet) \in \msgSet\}$\fbox{$\cup \{next[k]-1\}$}$)$\label{ln:maxS}\;
		
		\smallskip
		
		\textbf{macro} $\minTxObstSeq() := \min\{ \txS[k] :k \in \mathit{trusted}\}$\label{ln:minTxObstSeq}\;
		
		\smallskip
		
		\textbf{operation} $\mathsf{urbBroadcast}(m)$ \label{ln:oprurbBCast}\textbf{do} \{\textbf{wait}$(seq<\minTxObstSeq()+\buffCapacity)$; $seq \gets$ $seq+1$; $\mathsf{update}(m, i, seq, i)$; \label{ln:call2urbBroadcast}\} \tcc*{returns the transmission descriptor}
		
		\smallskip
		
		\textbf{procedure} $\mathsf{update}(m, j, s, k)$ \Begin{
			\lIf{$s \leq \rxS[j]$}{\Return \label{ln:updateReturn}}
			\If{$(\bullet, id=j, seq=s, \bullet) \notin \msgSet \land m \neq \bot$}{$\msgSet \gets \msgSet \cup \{(m, j, s, \false, \{j, k\}, [\text{-}1,\ldots,\text{-}1])\}$\label{ln:addBuff}\;}
			\lElse{\textbf{foreach} {$(\bullet, id=j, seq=s, \bullet, recBy=r, \bullet) \in \msgSet$} \textbf{do} {$r \gets r \cup \{j, k\}$\label{ln:addJack}}}
		}
		
		\smallskip
		
		\textbf{do forever} \Begin{\label{ln:doForever}
			
			\lIf{$(\exists r,r'\in \msgSet$$: r.msg$$= \bot \lor (r\neq r' \land ((r.id,r.seq)$$=(r'.id,r'.seq))))$}{$\msgSet \gets \emptyset$\label{ln:noDubOrBot}}		
			
			\lIf{$\neg ((mS \leq seq \leq mS\mathrm{+}\buffCapacity)$$\land (\{mS+1, \ldots ,seq\}$$\subseteq \{s : (\bullet,id=i,seq=s,\bullet)$$\in \msgSet\})$ \emph{\textbf{where}} $mS := \minTxObstSeq()$}{$\txS[] \gets [seq, \ldots, seq]$\label{ln:bufferMinTxObstSeq}}
			
			\lForEach{$p_k \in \sP$}{$(\rxS[k]\fbox{,next[k])} \gets (\max \{\rxS[k],\maxS(k) - \buffCapacity \},$ \fbox{$\max\{ next[k],\rxS[k]\ems{+1}\}$})\label{ln:rxSmaxSBuffCapacity}}
			
			\lWhile{$\exists r \in \msgSet: \obsolete(r)$}{$\rxS[r.id] \gets \rxS[r.id]+1$\label{ln:recBytrustedBuffer}}
			
			$\msgSet \gets \{ (\bullet,id=i,seq=s,\bullet) \in \msgSet: \minTxObstSeq() < s \} \cup$ $\{ (\bullet,id=k,seq=s,\bullet) \in \msgSet: p_k \in \sP \land ((\rxS[k] < s \land \maxS(k) - \buffCapacity \leq s)) \}$\label{ln:buffLimit}\;
			
			
			\ForEach{$(msg=m,id=j,seq=s,\mathit{delivered}=d,\mathit{recBy}=r,\mathit{prevHB}=e) \in \msgSet$\label{ln:forEachMSG}}{

				\If{$(\mathit{trusted} \subseteq \mathit{r})\land(\neg d)$\fbox{$\land s=next[k]$}\label{ln:trustedSubsetC}}{$\mathsf{urbDeliver}(m); d \gets \true;$\fbox{$next[k]\gets next[k]\mathrm{+}1$}\label{ln:trustedSubsetR}} 

				\textbf{let} $u := \mathit{HB}$\label{ln:deliveredTrue}\;
				
				\ForEach{$p_k \in \sP : (k \notin r \lor (i=j \land s=\txS[k]+1)) \land (e[k] < u[k])$\label{ln:sendMSGForEach}}{$e[k] \gets u[k]$; \textbf{send} $\mathsf{MSG}(m,j,s)$ \textbf{to} $p_k$\tcc*{piggyback lines~\ref{ln:sendMSG} and~\ref{ln:URBsendGossip}}\label{ln:sendMSG}}

				
			}
			
			\lForEach{$p_k \in \sP$}{$\mathbf{send}~ \mathsf{GOSSIP}(\maxS(k), \rxS[k], \txS[k] )~ \mathbf{to}~ p_k$\label{ln:URBsendGossip}}
		}
		
		\smallskip
		
		\textbf{upon} $\mathsf{MSG}(m, j, s)$ \textbf{arrival from} $p_k$ \textbf{do} {\label{ln:MSGarrive}\{$\mathsf{update}(m, j, s, k)$; \textbf{send} $\mathsf{MSGack}(j, s)$ \textbf{to} $p_k$;\label{ln:sendMSGack}\}
		}
		
		\smallskip
		
		\textbf{upon} $\mathsf{MSGack}(j, s)$ \textbf{arrival from} $p_k$ \textbf{do} \label{ln:MSGackarrive}\{$\mathsf{update}(\bot, j, s, k)$;\label{ln:arrMSGack}\}
		
		\smallskip
		
		\textbf{upon} $\mathsf{GOSSIP}(\textit{seqJ},\txSJ,\rxSJ)$ \textbf{arrival} \textbf{from} $p_j$ \textbf{do} {\{$(seq,\txS[j]$, $\rxS[j])\gets (\max \{\textit{seqJ}, seq\}, \max \{\txSJ, \txS[j]\}, \max \{\rxSJ, \rxS[j]\})$;\}\label{ln:urbGOSSIPupdate}}
		
		\smallskip
		
		\caption{\label{alg:qurb}Self-stabilizing quiescent uniform reliable broadcast; code for $p_i \in \sP$}	
		
	\end{\algSize}
	
\end{algorithm*}

\smallskip
\noindent
\textbf{A detailed description of Algorithm~\ref{alg:qurb}.~~}
Upon the invocation of the $\mathsf{urbBroadcast}(m)$ operation, Algorithm~\ref{alg:qurb} allows node $p_i$ to process $m$ without blocking as long as the flow-control mechanism can guarantee the available space at all trusted receivers (line~\ref{ln:call2urbBroadcast}). Such processing is done by creating a unique operation index, $seq$, and calling $\mathsf{update}()$.

The procedure $\mathsf{update}(m, j, s, k)$ receives a message, $m$, a unique message identifier, which is the pair $(j, s)$ that includes the sender identifier ($j$) and the sequence number ($s$), and the identifier of the forwarding processor, $p_k$. The procedure considers first the case in which $\msgSet_i$ does not include a record with the identifier $(j, s)$. In this case, $p_i$ adds to $\msgSet_i$ the record $(m, j, s, \false, \{j, k\}, [\text{-}1,\ldots,\text{-}1])$ (line~\ref{ln:addBuff}), which stands for the message itself and its unique identifier, as well as stating that it was not yet been delivered but that the identifiers of the sending ($j$) and forwarding ($k$) processors appear in $\mathit{recBy}$. Moreover, the record holds a vector that is smaller than any value of the $\mathit{HB}$ failure detector. For the case in which $\msgSet_i$ already includes a record with the identifier $(j, s)$, $p_i$ makes sure that $\mathit{recBy}$ includes the identifiers of the sending and forwarding nodes (line~\ref{ln:addJack}).      

Algorithm~\ref{alg:qurb} includes a do forever loop (lines~\ref{ln:doForever} to~\ref{ln:URBsendGossip}) that: (i) removes stale information (lines~\ref{ln:noDubOrBot} to~\ref{ln:buffLimit}), (ii) processes URB messages (lines~\ref{ln:forEachMSG} to~\ref{ln:sendMSG}) and (iii) gossips information that is needed for flow-control and recovery from arbitrary transient faults (line~\ref{ln:URBsendGossip}). 

\smallskip

\textbf{(i)} The removal of stale information includes the emptying the buffer whenever there are records for which the $msg$ field is $\bot$ or when there are two records with the same message identifier (line~\ref{ln:noDubOrBot}). Lines~\ref{ln:bufferMinTxObstSeq} to~\ref{ln:recBytrustedBuffer} implement recovery strategies that facilitate the bounds on the buffer size. Algorithm~\ref{alg:qurb} tests for the case in which, due to an arbitrary transient fault, the sender does not store all of its messages such that their sequence number is between $mS\text{+}1$ and $seq$ (line~\ref{ln:bufferMinTxObstSeq}), where $mS := \minTxObstSeq()$ is the smallest obsolete sequence number that $p_i$ had received from a trusted receiver. The recovery from such transient violations is done by allowing the sender to send $\buffCapacity$ URB messages without considering the space available on the receiver-side. Similarly, on the receiver-side, Algorithm~\ref{alg:qurb} makes sure that the gap between the largest obsolete record, $\rxS[k]$ (of $p_k$'s messages) and the largest buffered sequence number, $\maxS(k)$, is not larger than $\buffCapacity$  (line~\ref{ln:rxSmaxSBuffCapacity}). Algorithm~\ref{alg:qurb} updates the receiver-side counter that stores the highest obsolete message number per sender (line~\ref{ln:recBytrustedBuffer}). To the end of bounding the memory use, $p_i$ keeps in $\msgSet_i$ messages that it has sent but for which it has not yet received an indication from all trusted receivers that they consider this message to be obsolete. It also keeps all non-obsolete messages (regardless of their sender).


\smallskip

\textbf{(ii)} Node $p_i$ processes records by testing the field $\mathit{recBy}$ of any not delivered message (line~\ref{ln:forEachMSG}). The message is delivered when $\mathit{recBy}$ encodes an acknowledgment from every trusted node (line~\ref{ln:trustedSubsetR}). Processor $p_i$ then marks the record as a delivered one and samples the $\mathit{HB}$ failure detector (line~\ref{ln:deliveredTrue}). This sample is used to decide when a transmission (or retransmission) of a URB message is needed (line~\ref{ln:sendMSGForEach}) in case an acknowledgment is missing or because the message sequence number is greater by one than the largest obsolete message number known to the sender. These messages are received and processed in line~\ref{ln:sendMSGack}, which includes acknowledging the message arrival. These acknowledgments are processed in line~\ref{ln:arrMSGack}.

\smallskip

\textbf{(iii)} At the end of the do-forever loop, $p_i$ gossips to every $p_k$ control information about the maximum $seq$ value that $p_i$ stores in a $p_k$ record as well as $p_k$'s obsolete records (lines~\ref{ln:URBsendGossip} and~\ref{ln:urbGOSSIPupdate}). The former value allows $p_k$ to maintain the correctness invariant, \ie $seq_k$ is not smaller than any other $seq$ value in the system that is associated with $p_k$. The latter value allows $p_k$ to control the flow of URB broadcasts according to the available space in $\msgSet_i$.       

\section{Correctness} 
\label{sec:qurbCorr}
This section brings the correctness proof of Algorithm~\ref{alg:qurb}. Theorem~\ref{thm:recovery} demonstrates recovery after the occurrence of the last arbitrary transient fault. Theorem~\ref{thm:closure} demonstrates that Algorithm~\ref{alg:qurb} satisfies the task specifications (Section~\ref{sec:spec}).

\subsection{Needed definitions} 
Definition~\ref{def:safeConfig} presents the necessary conditions for demonstrating that Algorithm~\ref{alg:qurb} brings the system to a legal execution (Theorem~\ref{thm:recovery}).   


\begin{definition}[Algorithm~\ref{alg:qurb}'s consistent sequence and buffer values]
	\label{def:safeConfig}
	Let $c$ be a system state and $p_i \in\sP$ a non-faulty processor. Suppose that (i)  $(\nexists r,r'\in \msgSet: r.msg = \bot \lor (r\neq r' \land ((r.id,r.seq)=(r'.id,r'.seq))))$, $((mS \leq seq_i \leq mS + \buffCapacity) \land (mS\text{+}1, \ldots ,seq_i\} \subseteq \{s : (\bullet,id=i,seq=s,\bullet) \in \msgSet_i\})$, $\forall p_k \in \sP: (\maxS_i(k)-\rxS_i[k]) \leq \buffCapacity$, $\nexists r \in \msgSet_i : \obsolete(r)$, $\forall (\bullet,id=i,seq=s,\bullet) \in \msgSet_i: mS < s$, $\forall (\bullet,id=k,seq=s,\bullet) \in \msgSet_i: p_k \in \sP \land \rxS_i[k] < s \land \maxS_i(k) \leq (s+\buffCapacity)$, where $mS:=\minTxObstSeq_i()$. 
	%
	%
	Moreover, (ii) $seq_i$ is greater than or equal to any $p_i$'s sequence values in the variables and fields related to $seq$ (including $p_i$'s records in $\msgSet_k$, where $p_k \in \sP$ is non-failing, and incoming messages to $p_k$) and $\forall p_j \in \sP:sMj\leq \rxS_j[i]$, where $sMj$ is either $\txS_i[j]$ or the value of the fields $\txSJ$ and $\rxSJ$ in a $\mathsf{GOSSIP}(\bullet,\txSJ,\bullet)$ message in transit from $p_j$ to $p_i$, and respectively, $\mathsf{GOSSIP}(\bullet,\rxSJ)$ message in transit from $p_i$ to $p_j$. Also, (iii)  $\forall k \in \mathit{trusted}_i : | \{ (\bullet,id=i,\bullet) \in \msgSet_k \}|\leq \buffCapacity \land seq_i\leq\minTxObstSeq_i()+\buffCapacity$. In this case, we say that $p_i$'s values in the variables and fields related to $seq$'s sequence values and $\msgSet$ are consistent in $c$.
\end{definition}

We note that not every execution that starts from a system state that satisfies Definition~\ref{def:safeConfig} is a legal execution. For example, consider a system with $\sP=\{p_i,p_j\}$ and an execution $R$, such that in its starting state it holds that $\msgSet_i=\{(m,i,1,\bullet)\}$ and $\msgSet_j=\{(m',i,1,\bullet)\}$, such that $m\neq m'$. The delivery of $m$ and $m'$ violates Definition~\ref{def:URB}'s validity requirement because no legal execution has $R$ as a suffix. Theorem~\ref{thm:closure} circumvents this difficulty using Definition~\ref{def:complete}. Thus, definitions~\ref{def:safeConfig} and~\ref{def:complete} provide the necessary and sufficient conditions for demonstrating self-stabilization. 

\begin{definition}[Complete execution with respect to $\mathsf{urbBroadcast}$ invocations]
	\label{def:complete}
	Let $R$ be an execution of Algorithm~\ref{alg:qurb}. Let $c, c'' \in R$ denote the starting system states of $R$, and respectively, $R''$, for some suffix $R''$ of $R$. We say that message $m$ is \emph{completely delivered} in $c$ if (i) the communication channels do not include $\mathsf{MSG}(msg=m,\bullet)$ messages (or $\mathsf{MSGack}$ messages with a message identifier $(id,seq)$ that refers to $m$), and (ii) for any non-failing $p_j \in\sP$ and $r=(msg=m,\bullet) \in \msgSet_j$, it holds that $r.\mathit{delivered}=\true$ and for any non-failing $p_k \in\sP$, we have $k \in r.\mathit{recBy}$. 
	%
	%
	Suppose that $R=R' \circ R''$ has a suffix $R''$, such that for any $\mathsf{urbBroadcast}$ message $m$ that is not completely delivered in $c''$, it holds that $m$ either does not appear in $c$ (say, due to an $\mathsf{urbBroadcast}(m)$ invocation during the prefix $R'$) or it is completely delivered in $c$.
	%
	%
	In this case, we say that $R''$ is complete with respect to $R$'s invocations of $\mathsf{urbBroadcast}$ messages. When the prefix $R'$ is empty, we say that $R$ is complete with respect to $\mathsf{urbBroadcast}$ invocations. 
\end{definition}

Theorems~\ref{thm:recovery} and~\ref{thm:closure} consider Definition~\ref{def:diffuse}. For the sake of simple presentation, Definition~\ref{def:diffuse} relies on the specifications' assumption (Section~\ref{sec:spec}) that every broadcasted message is unique (even without an explicit assignment of the message identifier $(id,seq)$ to $m$). 

\begin{definition}[The $\mathit{diffuse}()$ predicate]
	\label{def:diffuse}
	Let $p_i\in \sP$ and $c\in R$ be a system state. The predicate $\mathit{diffuse}_i(m)$ holds in $c$ if, and only if, $\exists (msg=m,\bullet,\mathit{delivered}=\false,\bullet) \in \msgSet_i$.
\end{definition}

\subsection{Basic facts} 
Both theorems~\ref{thm:recovery} and~\ref{thm:closure} use Lemma~\ref{thm:basic}. 


\begin{lemma}
	\label{thm:basic}
	Let $R$ be an execution of Algorithm~\ref{alg:qurb} and $p_i, p_j \in \sP$ be two non-failing processors. Suppose that in every system state $c \in R$,  $\mathit{diffuse}_i(m)$ holds, such that $(msg\neq \bot,\bullet, recBy=r, \bullet) \in \msgSet_i$ holds, but $j \in r$ does not. (i) Processor $p_i$ sends, infinitely often, the message $\mathsf{MSG}(m,j,s)$ to $p_j$ and $p_j$ acknowledges, infinitely often, via the message $\mathsf{MSGack}(j,s)$ to $p_i$. (ii) The reception of any of these acknowledgments guarantees that $j \in r$ holds within cost measure of $2$ (Section~\ref{ss:asynchronousCycles}). (iii) Suppose that $R$ is fair, then invariants (i) and (ii) occur within  $\bigO(1)$ asynchronous cycles.   
\end{lemma}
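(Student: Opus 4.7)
The argument decomposes naturally along (i)--(iii). The standing hypothesis that $\mathit{diffuse}_i(m)$ holds in every $c \in R$ forces the record for $m$ to persist in $\msgSet_i$ throughout $R$ with $\mathit{delivered}=\false$, $p_j$'s identifier absent from its $\mathit{recBy}$ field, and a stable message identifier, call it $(j',s')$; any change to the identifier would have to pass through a state triggering line~\ref{ln:noDubOrBot} or line~\ref{ln:addBuff} and momentarily break the persistence.

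For (i), I would evaluate the send-guard on line~\ref{ln:sendMSGForEach} at $k = j$ in each iteration of $p_i$'s do-forever loop. Its first disjunct $k \notin r$ holds in every iteration by hypothesis. Its conjunct $e[j] < u[j]$ (with $u := \mathit{HB}_i$) alternates between false---just after line~\ref{ln:sendMSG} assigns $e[j] \gets u[j]$---and true, because $p_j$ is non-failing and the unboundedness part of $\mathit{HB}$-liveness drives $\mathit{HB}_i[j]$ past any previously stored snapshot. Hence $p_i$ sends $\mathsf{MSG}(m,j',s')$ to $p_j$ infinitely often; fair communication delivers infinitely many such messages to $p_j$, each of which triggers the reply at line~\ref{ln:sendMSGack}, and a second appeal to fair communication returns infinitely many $\mathsf{MSGack}(j',s')$ messages to $p_i$.

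For (ii), every such acknowledgement routes to line~\ref{ln:arrMSGack}, which invokes $\mathsf{update}(\bot, j', s', j)$. Because the persisting record has $msg \neq \bot$, the insertion branch of line~\ref{ln:addBuff} is skipped and the else-branch (line~\ref{ln:addJack}) merges $\{j',j\}$ into the matching record's $\mathit{recBy}$ field, making $j \in r$ hold in the resulting local state. The happened-before distance from the arrival to that state consists of a single intra-processor edge (cost $0$), comfortably within the stated cost bound of $2$. For (iii), I would compose the bounds already available: Remark~\ref{ss:first asynchronous cycles} bounds each message transit by $\bigO(1)$ cycles, Remark~\ref{ss:FD asynchronous cycles} bounds the delay until $\mathit{HB}$-liveness takes effect by $\bigO(1)$ cycles, and the definition of a complete iteration supplies at least one execution of line~\ref{ln:sendMSGForEach} per cycle; the first send, first delivery, first ack reception, and resulting update to $r$ therefore all occur within $\bigO(1)$ cycles.

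The main obstacle is the possibility, permitted by the arbitrary starting state, that $e[j]$ has been corrupted to a value far larger than any current $\mathit{HB}_i[j]$, suppressing sends for a finite prefix of $R$. I would address this by invoking the unboundedness clause of $\mathit{HB}$-liveness (``$\forall K\,\exists \tau\colon \mathit{HB}_i^\tau[j] > K$'') rather than monotonicity alone: it guarantees $\mathit{HB}_i[j]$ eventually surpasses any corrupted $e[j]$, after which the guard fires in step with each heartbeat increment, yielding both the infinitely-often conclusion of (i) and the $\bigO(1)$-cycle conclusion of (iii).
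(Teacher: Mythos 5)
Your proposal is correct and takes essentially the same route as the paper's own (much terser) proof: both establish (i) from the send-guard of line~\ref{ln:sendMSGForEach} together with $\mathit{HB}$-liveness and fair communication, obtain (ii) from the acknowledgment path through lines~\ref{ln:arrMSGack} and~\ref{ln:addJack} with the round-trip giving the cost measure of $2$, and derive (iii) from Remark~\ref{ss:FD asynchronous cycles}. Your explicit handling of a corrupted $\mathit{prevHB}$ entry via the unboundedness clause of $\mathit{HB}$-liveness is a detail the paper leaves implicit, but it does not change the argument.
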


\begin{proof}
	\noindent \textbf{Invariants (i) and (ii).~~}
	Since $j \in \mathit{Correct}$ and $j \notin r$ in $c$, processor $p_i$ sends the message $\mathsf{MSG}(m,j,s)$ to $p_j$ infinitely often in $R$ (due to the do-forever loop, lines~\ref{ln:forEachMSG} and~\ref{ln:sendMSG} as well as $\mathit{HB}$-liveness and this lemma's assumptions). Moreover, $p_j$ receives $p_i$'s message (line~\ref{ln:sendMSGack}), and acknowledges it, infinitely often, so that $p_i$ receives $p_j$'s acknowledgment (line~\ref{ln:arrMSGack}), infinitely often, while making sure that $j$ is included in $r$ (line~\ref{ln:addJack}). Since a single round-trip is required for the latter to hold, the  cost measure is $2$.
	
	\smallskip
	
	\noindent \textbf{Invariant (iii).~~} This is implied by Remark~\ref{ss:FD asynchronous cycles} applied to the proof of Invariant (ii).
\end{proof}

\subsection{The convergence property} 
Theorem~\ref{thm:recovery} shows that the system reaches a state that satisfies Definition~\ref{def:safeConfig}.

\begin{theorem}[Convergence]
	\label{thm:recovery}
	Let $R$ be a fair execution of Algorithm~\ref{alg:qurb} that starts in an arbitrary system state. Within $\bigO(\buffCapacity)$ asynchronous cycles, the system reaches a state, $c \in R$, after which a suffix $R'$ of $R$ starts, such that $R'$ is complete with respect to the $\mathsf{urbBroadcast}$ invocations in $R$. Moreover, $seq$ and $\msgSet$ are consistent in any $c' \in R'$ (Definition~\ref{def:safeConfig}).
\end{theorem}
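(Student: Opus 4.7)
The plan is to establish the two claims separately. First I will show that the local consistency conditions of Definition~\ref{def:safeConfig} hold at every non-failing node within $\bigO(\buffCapacity)$ cycles, and then I will show that, once consistency holds, the remaining suffix is complete with respect to the URB-broadcast invocations in the sense of Definition~\ref{def:complete}. I expect the hard part to be the interplay between the flow-control mechanism and ``ghost'' records that may be present in the initial configuration: these must be disposed of either by being URB-delivered or by being removed via the bounded-buffer recovery actions, and the argument must carry this out without allowing the buffer to exceed $\buffCapacity$ per sender.

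For local consistency, I would argue one line of the \textbf{do forever} loop at a time. Within the first complete iteration of each non-failing $p_i$, line~\ref{ln:noDubOrBot} removes $\bot$ and duplicate records; line~\ref{ln:bufferMinTxObstSeq} restores the invariant relating $seq_i$, $\minTxObstSeq_i()$, and $\{s:(\bullet,id=i,seq=s,\bullet)\in\msgSet_i\}$ by resetting $\txS_i[]$ to $[seq_i,\ldots,seq_i]$; line~\ref{ln:rxSmaxSBuffCapacity} enforces $\maxS_i(k)-\rxS_i[k]\leq\buffCapacity$; and line~\ref{ln:buffLimit} shrinks $\msgSet_i$ to the range of sequence numbers compatible with the updated $\rxS_i$, $\maxS_i$, and $\minTxObstSeq_i$ bounds. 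These establish the conditions of clause (i) of Definition~\ref{def:safeConfig} after one complete iteration (plus the $\bigO(1)$ cycles it takes line~\ref{ln:recBytrustedBuffer} to exhaust obsolete entries, which is bounded because only $\buffCapacity$ sequence numbers per sender can be present after line~\ref{ln:buffLimit}).

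Clause (ii) of Definition~\ref{def:safeConfig}, which requires $seq_i$ to dominate all $p_i$-related sequence values system-wide, is secured by the $\mathsf{GOSSIP}$ messages: every non-failing $p_k$ transmits $\mathsf{GOSSIP}(\maxS_k(i),\rxS_k[i],\txS_k[i])$ to $p_i$ infinitely often (line~\ref{ln:URBsendGossip}), and upon arrival line~\ref{ln:urbGOSSIPupdate} raises $seq_i$ to $\max\{\textit{seqJ},seq_i\}$. By Remark~\ref{ss:first asynchronous cycles} any packets present in channels at the start reach their destinations within $\bigO(1)$ cycles, after which a fresh round of gossip from every non-failing node to every non-failing node completes in another $\bigO(1)$ cycles. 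A symmetric argument, running $\rxS_j[i]$ upward via gossip and in-transit $\mathsf{GOSSIP}$ fields, gives $sMj\leq\rxS_j[i]$ in clause (ii). Clause (iii) follows because, once the flow-control gossip has converged, the wait in line~\ref{ln:call2urbBroadcast} prevents $seq_i$ from exceeding $\minTxObstSeq_i()+\buffCapacity$, and the receiver-side bound on $|\{(\bullet,id=i,\bullet)\in\msgSet_k\}|$ is then enforced by line~\ref{ln:buffLimit} at $p_k$.

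The $\bigO(\buffCapacity)$ factor enters through completeness. After clause (i) holds at every non-failing $p_i$, the only records surviving in any $\msgSet_i$ are those with sequence number in a window of size at most $\buffCapacity$ per sender, so there are at most $\buffCapacity\cdot n$ ``inherited'' messages per node still being diffused. For each such record with $\mathit{delivered}=\false$, Lemma~\ref{thm:basic} guarantees that, within $\bigO(1)$ cycles, every non-failing $p_j$ is added to $\mathit{recBy}$, so either the message becomes eligible for URB-delivery (line~\ref{ln:trustedSubsetR}) and then obsolete (macro $\obsolete$), or it already was delivered; either way line~\ref{ln:recBytrustedBuffer} eventually advances $\rxS[id]$ past it and line~\ref{ln:buffLimit} evicts it. In parallel, stale $\mathsf{MSG}$/$\mathsf{MSGack}$ packets flush out of channels in $\bigO(1)$ cycles by Remark~\ref{ss:first asynchronous cycles}. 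Because obsolete sequence numbers advance monotonically and only $\bigO(\buffCapacity)$ distinct numbers per sender stand between the arbitrary starting state and a clean window, the process terminates in $\bigO(\buffCapacity)$ cycles. The suffix beginning at that state contains no residual in-transit messages or undelivered records referring to URB-broadcasts from before $c$, so every $\mathsf{urbBroadcast}$ invocation either predates $c$ and is completely delivered in $c$ or occurs within the suffix, which is precisely Definition~\ref{def:complete}. The main obstacle in formalizing this last paragraph is bookkeeping: one has to show that while the ghost messages are being drained, the wait guard in line~\ref{ln:call2urbBroadcast} (together with the recovery reset in line~\ref{ln:bufferMinTxObstSeq}) prevents new broadcasts from inflating the buffer beyond the capacity needed for the $\bigO(\buffCapacity)$ bound to hold.
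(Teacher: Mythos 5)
Your proof follows essentially the same route as the paper's: flush in-transit packets and restore the local buffer and flow-control invariants of Definition~\ref{def:safeConfig} within $\bigO(1)$ asynchronous cycles via one complete iteration of the do-forever loop together with the $\mathsf{GOSSIP}$ exchange, and then obtain the $\bigO(\buffCapacity)$ bound by draining the at-most-$\buffCapacity$ pending records per sender, one obsolete sequence number at a time. The only notable difference is presentational: you charge the $\bigO(\buffCapacity)$ factor to establishing completeness, whereas the paper packages the same drainage argument as its Argument~(6) for the second part of Invariant~(iii) of Definition~\ref{def:safeConfig}; the underlying mechanism (each surviving record is acknowledged, delivered, and made obsolete within $\bigO(1)$ cycles, and this repeats at most $\buffCapacity$ times) is identical.
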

\begin{proof}
	The proof is implied by arguments (1) to (6).
	
	\smallskip

	\noindent Argument (1): \emph{The case in which $\mathsf{MSG}(m,\bullet)$ (or its correspondent $\mathsf{MSGack}(\bullet)$) appears in a communication channel at $R$'s starting system state.~~} 
	Suppose that in $R$'s starting system state, it holds that $\mathsf{MSG}(m,\bullet)$ appears in an incoming communication channel to $p_k$. Since $R$ is fair, then by Remark~\ref{ss:first asynchronous cycles} it holds that within $\bigO(1)$ asynchronous cycles, $\mathsf{MSG}(m,\bullet)$, or respectively, $\mathsf{MSGack}(\bullet)$ arrives at its destination, $p_k$. For the case of $\mathsf{MSG}(m,\bullet)$, this arrival results in the execution of line~\ref{ln:MSGarrive} and then line~\ref{ln:addBuff} if $m$'s record was not already in $\msgSet_k$. For the case of $\mathsf{MSGack}(\bullet)$ and $(m,\bullet) \in \msgSet_k$, line~\ref{ln:MSGackarrive} has a similar effect. Moreover, by the code of Algorithm~\ref{alg:qurb}, the case of $\mathsf{MSGack}(\bullet)$ and $(m,\bullet) \notin \msgSet_k$ does not change $p_k$'s state. Therefore, without loss of generality, the rest of the proof can simply focus on the case in which $R$'s starting system state, it holds that $\exists_{p_k \in\sP}:(m,\bullet) \in \msgSet_k$. Note that by similar arguments, we can also consider the case in which the communication channels include message $\mathsf{MSGack}$ with a message identifier $(id,seq)$ that refers to $m$.
	
	\smallskip
	
	\noindent Argument (2): \emph{Definition~\ref{def:safeConfig}'s Invariant (i) holds for the case in which $\exists_{p_k \in\sP}:(m,\bullet) \in \msgSet_k$ in $R$'s starting system state.~~} 
	Within $\bigO(1)$ asynchronous cycles, $p_k$ runs a complete iteration of its do forever loop (lines~\ref{ln:doForever} to~\ref{ln:sendMSG}). Invariant (i) is implied by lines~\ref{ln:noDubOrBot} to~\ref{ln:buffLimit}.

	\smallskip
	
	\noindent Argument (3): \emph{$seq_i$ is greater than or equal to any $p_i$'s sequence values in the variables and fields related to $seq$ in $c'$.~~} 
	Within $\bigO(1)$ asynchronous cycles, every message that was present in a communication channel in $R$'s starting system state arrives at the receiver. Therefore, without loss of generality, the proof can focus on the values of $seq_i$ at the non-failing nodes $p_i,p_k \in \sP$. Other than in $seq_i$, every sequence value that is related to $p_i$ can only be stored in records of the form $(\bullet,id=i,seq=s',\bullet)$ that are stored in $\msgSet_k$. Suppose that in $R$'s starting system state, it holds that $s'> seq_i$. By lines~\ref{ln:URBsendGossip} and~\ref{ln:urbGOSSIPupdate}, within $\bigO(1)$ asynchronous cycles, $p_k$ gossips $s_k\geq s'$ to $p_i$ and the latter updates $seq_i$ upon reception. The argument proof is complete because only $p_i$ (line~\ref{ln:call2urbBroadcast}) can introduce new $seq$ values that are associated with $p_i$, and thus, the argument invariant holds for any system state in $R'$.
	
	\smallskip
	
	\noindent Argument (4): \emph{Definition~\ref{def:safeConfig}'s Invariant (ii) holds in $c'$.~~} 
	The case of $seq$ values is covered by Argument (3). Within an asynchronous cycle, every message arrives at the receiver. Therefore, without loss of generality, the proof can focus on the values of $sMj$ at the non-failing node $p_i \in \sP$. Suppose that in $R$'s starting system state, the predicate $\txS_i[j]\leq \rxS_j[i]$ does not hold for some non-failing $p_j \in \sP:j \in \mathit{trusted}_i$. By lines~\ref{ln:URBsendGossip} and~\ref{ln:urbGOSSIPupdate}, within $\bigO(1)$ asynchronous cycles, $p_j$ gossips $\rxS_j[i]$ to $p_i$ and the latter updates $\txS_i[j]$ upon reception as well as $p_i$ gossips $\txS_i[j]$ to $p_j$ and the latter updates $\rxS_j[i]$ upon reception. Thus, Definition~\ref{def:safeConfig}'s Invariant (ii) holds is any system state that follows.
	
	\smallskip
	
	The rest of the proof assumes, without loss of generality, that Definition~\ref{def:safeConfig}'s invariants (i) and (ii) hold throughout $R$. Generality is not lost due to arguments (1) to (4). 
	
	\smallskip
	
	\noindent Argument (5): \emph{$\forall k \in \mathit{trusted}_i : | \{ (\bullet,id=i,\bullet) \in \msgSet_k \}|\leq \buffCapacity$ holds in $c'$ (first part of Definition~\ref{def:safeConfig}'s Invariant (iii)).~~}
	Let $p_i,p_k \in \sP$ be two non-faulty nodes. 
	For the case of $p_k$'s records in $\msgSet_i$, Definition~\ref{def:safeConfig}'s Invariant (i) says that $\forall (\bullet,id=k,seq=s_k,\bullet) \in \msgSet_i: \max \{ s'_k: (\bullet, id=k,seq=s'_k,\bullet) \in \msgSet_i\} \leq (s_k+\buffCapacity)$. In other words, the largest sequence number, $\max \{ s'_k: (\bullet, id=k,seq=s'_k,\bullet) \in \msgSet_i\}$, of a $p_k$'s records in $\msgSet_i$ minus $\buffCapacity$ must be smaller than $s_k$ of any $p_k$'s records in $\msgSet_i$.
	
	\smallskip
	
	\noindent Argument (6): \emph{$seq_i<\minTxObstSeq_i()+\buffCapacity$ holds in $c'$ (second part of Definition~\ref{def:safeConfig}'s Invariant (iii)).~~}
	%
	%
	Let $c \in R$ and $x_c=(seq_i-\minTxObstSeq())$. Assume, towards a contradiction, that $x_c\geq \buffCapacity$ for at least $\bigO(\buffCapacity)$ asynchronous cycles. Let $A_c=\cup_{k \in \mathit{trusted}_i} \{ r \in \msgSet_k : \neg \obsolete_k(r) \land r.id=i \}$ and $B_c= \cup_{k \in \mathit{trusted}_i} \{ r \in \msgSet_i : r.id=i \land \txS_i[k] < r.seq  \}$ as well as $rec \in A_c$ and $rec' \in B_c$ be the records with the smallest sequence number (among all the records with $id=i$) that $p_k$, and respectively, $p_i$ stores in $c$. We start the proof by showing that, within $\bigO(1)$ asynchronous cycles, the system reaches a state $c' \in R$ for which $rec \notin A_{c'}$ and $rec' \notin B_{c'}$ hold. We then show that $x_{c'}<\buffCapacity$ holds.
	
	\smallskip
	
	\noindent \emph{Showing that $rec \notin A_{c'}$  because $\obsolete(rec)$ holds.~~}
	Let $p_i \in \sP$ be a non-faulty node. Suppose that $\exists p_k\in \sP:(\bullet,id=i,\mathit{delivered}=d_k,recBy=r_k, \bullet) \in \msgSet_k \land k \in \mathit{trusted}_i \land d_k=\false$ holds for some value of $r_k$ in $c$. For any $p_j \in \sP$ for which $j \in \mathit{trusted}_k$ holds throughout $R$'s first $\bigO(1)$ asynchronous cycles, we know that the system reaches, within $\bigO(1)$ asynchronous cycles, a state in which $j \in r_k$ is true  (invariants (i) and (ii) of Lemma~\ref{thm:basic}). Once $\forall j \in \mathit{trusted}_k:j \in r_k$ holds, $p_k$ assigns $\true$ to $d_k$ (line~\ref{ln:trustedSubsetR}). Thus, $rec \notin A_{c'}$ holds within $\bigO(1)$ asynchronous cycles in $R$. Moreover, $\obsolete(rec)$ holds due to the choice of $rec$ as the one with the smallest sequence number.
	
	\smallskip
	
	\noindent \emph{Showing that $rec' \notin B_{c'}$.~~}
	As long as $rec' \in B_{c}$, node $p_i$ sends $\mathsf{MSG}(rec'.msg,rec'.id,rec'.seq)$ to $p_k$ infinitely often (line~\ref{ln:sendMSG}). Within an asynchronous cycle, $p_k$ receives this $\mathsf{MSG}$ message. Lines~\ref{ln:sendMSGack} and~\ref{ln:updateReturn} imply that either $(rec'.msg,rec'.id,rec'.seq,\bullet) \in A_{c'}$ or $rec'.seq \leq \rxS_k[i]$. Within $\bigO(1)$ asynchronous cycles, $(rec'.msg,rec'.id,rec'.seq,\bullet) \notin A_{c'}$ and $\obsolete(rec)$ hold (due the $rec \notin A_{c'}$ case) as well as $\rxS_k[rec'.id]\geq rec'.seq$ (line~\ref{ln:recBytrustedBuffer}). Moreover, by Argument (4)'s proof, $rec' \notin B_{c'}$ since $\txS_k[rec'.id]\geq rec'.seq$ holds.  
	
	\smallskip
	
	\noindent \emph{Showing that $x_{c'}<\buffCapacity$.~~}
	Due to the assumption at the start of this proof, throughout $R$'s first $\bigO(\buffCapacity)$ asynchronous cycles, $p_i$ does not increment $seq_i$ and call $\mathsf{update}()$ (line~\ref{ln:call2urbBroadcast}). Thus, on the one hand, no new $p_i$'s record is added to $\msgSet_k$ throughout $R$'s first $\bigO(\buffCapacity)$ asynchronous cycles (due to the assumption that appears in the start of this case), while on the other hand, within $\bigO(1)$ asynchronous cycles, the system reaches a state in which either $p_i$ stops including $p_k$ in $\mathit{trusted}_i$ or it removes at least one record from $A_{c}$ and $B_{c}$. The latter can repeat itself at most $\buffCapacity$ times due to Argument (5). This completes the proof of the argument and the proof of the theorem.
\end{proof}

\remove{   
	
	\smallskip
	
	We show that for any non-failing $p_i \in \sP$ and $k \in \mathit{trusted}_i $, we have $| \{ (\bullet,id=i,\bullet) \in \msgSet_k \}|\leq 2\cdot \buffCapacity$. We note that for the case of $A_{i,k}=\emptyset$, the proof is implied immediately from line~\ref{ln:buffLimit}, where $A_{i,k}=\{ s: (\bullet, id=i,seq=s,\mathit{delivered}=\false,\bullet) \in \msgSet_k\}$. Thus, for the rest of the proof, we assume $A_{i,k}\neq\emptyset$, which implies $\remoteMark_k(i) \leq \localMark_k(i)$ (lines~\ref{ln:localMark} and~\ref{ln:remoteMark}). \OL{local can also take the min of the max delivered seq. So local can potentially be smaller than remote.}	
	
	By the assumption made immediately before Argument (5) that invariants (ii) and (iii) \OL{of def 4?} hold throughout $R$, it holds that $\txS_i[k]\leq \remoteMark_j(i)) \leq s'+\buffCapacity$ \OL{remoteRecMark k(i)? We changed min to max, so this <= no longer holds.}, and thus, $(\bullet, id=i,seq=s',\bullet) \in \msgSet_k \implies \txS_i[k] - \buffCapacity \leq s'$. By arguments (4) and (5), we know that $s' \leq seq_i$ and $seq_i<\min\{ \txS_i[k] :k \in \mathit{trusted}_i\}+ \buffCapacity$, respectively, \ie $s' <\min\{ \txS_i[k] :k \in \mathit{trusted}_i\}+ \buffCapacity$. In other words, $(\bullet, id=i,seq=s',\bullet) \in \msgSet_k \implies \txS_i[i] \leq s' <\min\{ \txS_i[k] :k \in \mathit{trusted}_i\}+ 2 \cdot \buffCapacity$, which implies, $(\bullet, id=i,seq=s',\bullet) \in \msgSet_k \implies \txS_i[i] \leq s' \leq \txS_i[i] + 2 \cdot \buffCapacity$. That is, $\msgSet_k$ stores at most $2 \cdot \buffCapacity$ records of $p_i$ with unique sequence numbers. 
	
} 

\subsection{The closure property} 
Theorem~\ref{thm:closure} considers system executions that reach suffixes, $R$, that satisfy definitions~\ref{def:safeConfig} and~\ref{def:complete}. Theorem~\ref{thm:closure} then shows that $R$ satisfies Definition~\ref{def:URB}, \ie $R \in LE_{\text{URB}}$ is a legal execution.

\begin{theorem}[Closure]
	\label{thm:closure}
	Let $R$ be an execution of Algorithm~\ref{alg:qurb} that is complete with respect to $\mathsf{urbBroadcast}$ invocations (or $R$ is a suffix of an execution $\mathcal{R}=\mathcal{R}'\circ R$ for which $R$ is complete with respect to the invocation of $\mathsf{urbBroadcast}$ messages in $\mathcal{R}$, cf. Definition~\ref{def:complete}) and $seq$'s sequence values are consistent in $c \in R$ (Definition~\ref{def:safeConfig}). Algorithm~\ref{alg:qurb} demonstrates in $R$ a construction of the URB communication abstraction. Moreover, each invocation of the operation $\mathsf{urbBroadcast}()$ incurs $\bigO(n^2)$ messages and has the cost measure of $2$ (Section~\ref{ss:asynchronousCycles}).
\end{theorem}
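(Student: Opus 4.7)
The plan is to verify each requirement of Definition~\ref{def:URB} in turn, leveraging the consistency invariants of Definition~\ref{def:safeConfig} (which rule out phantom or duplicate records in $c$) together with the completeness assumption of Definition~\ref{def:complete} (which guarantees that any record surviving in $c$ from before $R$ is already marked delivered and on its way to being obsoleted). Together these imply that every new $\mathsf{urbDeliver}$ event in $R$ originates from an $\mathsf{urbBroadcast}(m)$ invoked during $R$ itself, so the analysis reduces to tracing individual diffusion instances. I will then tally messages to obtain the $\bigO(n^2)$ bound and cost measure~$2$.

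For integrity, I would observe that line~\ref{ln:trustedSubsetR} sets $r.\mathit{delivered}\gets\true$ atomically with $\mathsf{urbDeliver}(m)$, so the guard $\neg d$ on line~\ref{ln:trustedSubsetC} forbids a second delivery of the same record; uniqueness of records per $(id,seq)$ is ensured by the sanity check of line~\ref{ln:noDubOrBot} together with Definition~\ref{def:safeConfig}; and the early return at line~\ref{ln:updateReturn} (driven by $\rxS[j]$) blocks any re-insertion of a previously delivered, obsoleted record with $\mathit{delivered}=\false$. For validity, I would induct along the happened-before chain in $R$ ending in the delivery step: by completeness the record $(m,j,s,\bullet)\in\msgSet_i$ cannot be inherited from $c$ with $\mathit{delivered}=\false$, so it was created by a call to $\mathsf{update}(m,j,s,\cdot)$ at line~\ref{ln:addBuff}, triggered either by $p_j$'s own $\mathsf{urbBroadcast}$ at line~\ref{ln:call2urbBroadcast} (done) or by an arriving $\mathsf{MSG}(m,j,s)$ at line~\ref{ln:MSGarrive} forwarded from some peer that previously held such a record. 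Finiteness of happened-before paths then drives the induction back to $p_j$'s broadcast step.

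Termination splits into two cases. If a non-faulty $p_j$ invokes $\mathsf{urbBroadcast}(m)$, then $\mathit{diffuse}_j(m)$ holds at $p_j$; applying Lemma~\ref{thm:basic} between $p_j$ and each non-faulty $p_k$ installs the record at every non-faulty node via line~\ref{ln:addBuff}, and a second application between every pair of non-faulty nodes forces each $r_k.\mathit{recBy}$ to contain $\mathit{Correct}(F)$, after which $\Theta$-liveness yields $\mathit{trusted}_k\subseteq\mathit{Correct}(F)\subseteq r_k.\mathit{recBy}$ and the guard of line~\ref{ln:trustedSubsetC} fires at every non-faulty $p_k$. In the truly uniform case, suppose a (possibly faulty) $p_i$ URB-delivers $m$; then at the delivery step $\mathit{trusted}_i\subseteq r_i.\mathit{recBy}$ held, and by $\Theta$-accuracy some non-faulty $p_k\in\mathit{trusted}_i$ satisfies $k\in r_i.\mathit{recBy}$. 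Inspecting the only two ways $k$ can enter $\mathit{recBy}$ (lines~\ref{ln:addBuff} and~\ref{ln:addJack}), under Definition~\ref{def:safeConfig} this entry witnesses an earlier call $\mathsf{update}(m,j,s,\cdot)$ executed at $p_k$, so $\mathit{diffuse}_k(m)$ held at the non-faulty $p_k$; the first case applied now starting from $p_k$ guarantees every non-faulty node URB-delivers $m$.

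For complexity, each non-faulty holder of $(m,j,s,\bullet)$ retransmits $\mathsf{MSG}$ to each $p_k$ only until $k$ joins $\mathit{recBy}$ or $p_k$ leaves $\mathit{trusted}$ ($\Theta$-liveness plus Lemma~\ref{thm:basic}), so summing over holders and targets gives $\bigO(n^2)$ $\mathsf{MSG}/\mathsf{MSGack}$ messages per $\mathsf{urbBroadcast}(m)$, which simultaneously establishes quiescence; the cost measure of~$2$ is the single $\mathsf{MSG}\to\mathsf{MSGack}$ round-trip needed by each receiver to collect acknowledgments from its trusted set. The subtle point, and thus the main obstacle, is the uniform sub-case of termination: justifying that $k\in r_i.\mathit{recBy}$ at a non-faulty $p_i$ really entails that the non-faulty $p_k$ processed $m$ via $\mathsf{update}$, rather than reflecting a vestigial $\mathit{recBy}$ bit from before $R$. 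This is exactly where Definitions~\ref{def:safeConfig} and~\ref{def:complete} must be invoked in tandem to exclude stale witnesses, and I expect this to be the most delicate step of the argument.
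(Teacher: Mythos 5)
Your overall decomposition matches the paper's: integrity from the $\neg d$ guard and the atomic flag update at line~\ref{ln:trustedSubsetR}, validity from completeness plus a trace back through $\mathsf{update}$ calls, termination split into the broadcaster case (diffusion to all correct nodes, then $\Theta$-liveness) and the uniform case (via $\Theta$-accuracy extracting a correct witness $p_k\in\mathit{trusted}_i\cap r_i.\mathit{recBy}$ and restarting the diffusion argument from $p_k$). These parts are sound and essentially the paper's own proof, including the step you flag as delicate --- the paper likewise invokes the completeness assumption to rule out a vestigial $\mathit{recBy}$ bit.

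There is, however, a genuine gap in your quiescence/message-count paragraph. You claim each holder retransmits $\mathsf{MSG}$ to $p_k$ ``only until $k$ joins $\mathit{recBy}$ or $p_k$ leaves $\mathit{trusted}$,'' but the send condition at line~\ref{ln:sendMSGForEach} is $(k \notin r \lor (i=j \land s=\txS[k]+1)) \land (e[k]<u[k])$. The second disjunct makes the original sender keep retransmitting its own record with sequence number $\txS_i[k]+1$ to $p_k$ \emph{even after} $k \in r$, precisely because a transient fault may have left $p_i$ believing $p_k$ acknowledged a message $p_k$ never stored. Your argument never shows these retransmissions stop, so finiteness (and hence the $\bigO(n^2)$ bound) does not follow. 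The paper closes this with a separate argument: the record eventually becomes obsolete at $p_k$ (all trusted acknowledgments collected and delivered), so $p_k$ advances $\rxS_k[i]$ past $s$ (line~\ref{ln:recBytrustedBuffer}), and gossip (lines~\ref{ln:URBsendGossip} and~\ref{ln:urbGOSSIPupdate}) propagates this back so that $\txS_i[k]\geq s$, falsifying the disjunct. A secondary inaccuracy: the loop at line~\ref{ln:sendMSGForEach} ranges over all $p_k\in\sP$, not over $\mathit{trusted}$, so retransmission to a faulty $p_k$ is stopped by $\mathit{HB}$-completeness (the guard $e[k]<u[k]$ eventually failing), not by $p_k$ ``leaving $\mathit{trusted}$.''
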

\begin{proof}
	We note that the property of validity holds with respect to Algorithm~\ref{alg:qurb} due to this theorem's assumption about $R$'s completeness, which can be made due to Theorem~\ref{thm:recovery}. 
	We observe that Algorithm~\ref{alg:qurb} guarantees, within $\bigO(1)$ asynchronous cycles, the property of integrity since only non-delivered messages can be delivered (line~\ref{ln:trustedSubsetR}) and once delivered they cannot be delivered again (line~\ref{ln:deliveredTrue}). 
	
	\smallskip
	
	The proof demonstrates claims~\ref{thm:diffuseAll} and~\ref{thm:urbDeliverAll} before showing the termination and quiescent properties. Let $R$ be an execution of Algorithm~\ref{alg:qurb}. We note that no processor removes the processor identity $\ell$ from the set $recBy$ throughput $R$.
	
	\begin{claim}
		\label{thm:diffuseAll}
		Let $p_i,p_j \in \sP$ be two non-faulty nodes. The fact that $\mathit{diffuse}_i(m)$ (Definition~\ref{def:diffuse}) holds for every system state $c \in R$ implies that eventually, $\mathit{diffuse}_j(m)$ holds in $c' \in R$.
	\end{claim}
	\begin{claimProof}	
		The proof is implied by arguments (2) and (3), which consider Argument (1).
		
		\smallskip
		
		\noindent Argument (1): \emph{Only due to lines~\ref{ln:oprurbBCast} and~\ref{ln:sendMSGack} can $\mathit{diffuse}_i(m)$ hold during $R$.~~} 
		Only line~\ref{ln:addBuff} can add $(m,k,s,\bullet):m\neq \bot$ to $\msgSet_i$. This can only happen due to an earlier invocation of operation $\mathsf{urbBroadcast}(m)$ (line~\ref{ln:oprurbBCast}) or $\mathsf{MSG}$ message arrival (line~\ref{ln:sendMSGack}). 
		
		\smallskip
				
		\noindent Argument (2): \emph{Suppose that $R$ includes a system state $c''$ in which $(m,k,s,r, \bullet) \in \msgSet_i:m\neq \bot \land j \in r$ holds. Then, $\mathit{diffuse}_j(m)$ holds in $R$ (possibly before $c''$).~~} 
		Processor $p_i$ adds $j$ to the $recBy$ field value, $r$, only due to the reception of $\mathsf{MSG}(m)$ (or it's correspondent $\mathsf{MSGack}(\bullet)$) from $p_j$ (lines~\ref{ln:MSGarrive} and~\ref{ln:MSGackarrive}). By the theorem's assumption that $m$ does not appear in $R$'s starting system state, it follows that $p_j$ receives $\mathsf{MSG}(m)$. Immediately after the first reception, it holds that $(m,k,s,\bullet) \in \msgSet_j:m\neq \bot$ in $c' \in R$ (due to the execution of line~\ref{ln:MSGarrive} and then line~\ref{ln:addBuff}). Thus, the predicate $\mathit{diffuse}_j(m)$ holds in $c'$.
		
		
		\smallskip
		
		\noindent Argument (3): \emph{Suppose that $R$ includes no system state in which $(m,k,s,r, \bullet) \in \msgSet_i:m\neq \bot \land j \in r$ holds. Then, the predicate $\mathit{diffuse}_j(m)$ holds eventually.~~} 
		By the assumption that $p_j$ is non-faulty and the $\mathit{HB}$-liveness property (Section~\ref{sec:spec}), eventually, $p_i$ does not suspect $p_j$ and thus, it follows that $p_i$ sends infinitely often the message $\mathsf{MSG}(m)$ to $p_j$ (line~\ref{ln:URBsendGossip}). Due to the fair communication assumption, $p_j$ receives $\mathsf{MSG}(m)$ infinitely often from $p_i$. Immediately after the first time in which $p_j$ receives $\mathsf{MSG}(m)$, it holds that $(m,k,s,\bullet) \in \msgSet_j:m\neq \bot$ in $c' \in R$ (due lines~\ref{ln:MSGarrive} and~\ref{ln:addBuff}). Thus, the predicate $\mathit{diffuse}_j(m)$ holds in $c'$.
		%
	\end{claimProof} 
	
	\begin{claim}
		\label{thm:urbDeliverAll}
		Suppose that for any non-faulty $p_i \in \sP$, there is a system state $c_i \in R$, after which $\mathit{diffuse}_i(m)$ holds. Eventually, in $R$, any non-faulty $p_j \in \sP$ raises $\mathsf{urbDeliver}_j(m)$.
	\end{claim}
	\begin{claimProof}
		Note that this claim's assumption that $\mathit{diffuse}_i(m)$ always holds after $c_i$ implies that eventually $(m,\bullet)$ is always included in the foreach loop of line~\ref{ln:forEachMSG}. Moreover, $p_j$ raises the event $\mathsf{urbDeliver}_j(m)$ only when the if-statement condition in line~\ref{ln:trustedSubsetR} holds. By $\mathit{diffuse}()$'s definition, $\exists (msg=m,\bullet,\mathit{delivered}=\false,\mathit{recBy}=r,\bullet) \in \msgSet_i$ for some value of $r$ in every system state in $R$. The rest of the proof shows that $\mathit{trusted}_i \subseteq r$ holds eventually.   
		
		\smallskip
		
		\noindent Argument (1): \emph{$(m,\bullet, \mathit{recBy}=r_i, \bullet) \in \msgSet_i \land \mathit{Correct} \subseteq r_i$ holds.~~} 
		Lemma~\ref{thm:basic} and the claim's assumption that $\mathit{diffuse}_i(m)$ holds in every system state after $c_i$ imply that $j \in r_i$ holds eventfully. Using arguments that are symmetric to the ones above, also $i \in r_j:(m,\bullet, \mathit{recBy}=r_j, \bullet) \in \msgSet_j$ holds eventfully. Thus, $\mathit{Correct} \subseteq r_i$ holds eventually.
		
		\smallskip
		
		\noindent Argument (2): \emph{$(m,\bullet, \mathit{recBy}=r_i, \bullet) \in \msgSet_i \land \mathit{trusted}_i \subseteq r_i$ holds.~~} 
		By the $\Theta$-liveness property, $\mathit{trusted}_i \subseteq \mathit{Correct}$ eventually. From Argument (1), we have that the system eventually reaches a state after which $\mathit{trusted}_i \subseteq \mathit{recBy}_i$ always holds.  
	\end{claimProof} 
	
	\smallskip
	
	\noindent \textbf{Proof of the termination property.~~}
	Let $p_i,p_j \in \sP$ be two non-faulty nodes. The proof is implied by the following arguments (1) and (2) as well as Argument (6) of Lemma~\ref{thm:recovery}.
	
	\smallskip
	
	\noindent Argument (1): \emph{$p_j$ raises $\mathsf{urbDeliver}_j(m)$ eventfully when $p_i$ invokes $\mathsf{urbBroadcast}_i(m)$.~~} 
	Since $i \in \mathit{Correct}$, the invocation of $\mathsf{urbBroadcast}_i(m)$ makes sure that $(m,\bullet) \in \msgSet_i$ in a way that implies $\mathit{diffuse}_i(m)$ (by lines~\ref{ln:call2urbBroadcast} and~\ref{ln:addBuff} since by Theorem~\ref{thm:recovery}'s Argument (3), line~\ref{ln:addJack} is not executed in steps that invoke $\mathsf{urbBroadcast}_i(m)$, which increment $seq_i$). Since $\mathit{diffuse}_i(m)$ holds, $\mathit{diffuse}_k(m)$ holds eventually for any non-failing $p_k\in \sP$ (Claim~\ref{thm:diffuseAll}). This implies that $p_j$ invokes $\mathsf{urbDeliver}(m)$ (Claim~\ref{thm:urbDeliverAll}).
	
	\smallskip
	
	\noindent Argument (2): \emph{Suppose that a (correct or faulty) processor $p_k$ invokes $\mathsf{urbDeliver}_k(m)$. Any non-failing node invokes $\mathsf{urbDeliver}(m)$.~~} 
	Immediately before $p_k$ invokes $\mathsf{urbDeliver}_k(m)$, it holds that $(m,\bullet, \mathit{recBy}=r_k, \bullet) \in \msgSet_k \land \mathit{trusted}_i \subseteq r_k$ (line~\ref{ln:trustedSubsetR}). Due to the $\Theta$-accuracy property, $\exists j \in \mathit{trusted}_k\cap \mathit{Correct}$. By this theorem's assumption about $R$, the only way in which $j \in \mathit{trusted}_k \subseteq r_k$ can hold, is if, before $p_k$ invokes $\mathsf{urbDeliver}_k(m)$, node $p_j$ had received $\mathsf{MSG}(m,\bullet)$ (or its corresponding acknowledgment, $\mathsf{MSGack}(\bullet)$) that was sent by $p_k$ (due to reasons that are similar to the ones that appear in the proof of Argument (2) of Claim~\ref{thm:diffuseAll}). Upon $p_j$'s first reception of $\mathsf{MSG}(m,\bullet)$ (or $\mathsf{MSGack}(\bullet)$), processor $p_j$ stores $m$, such that $\mathit{diffuse}_j(m)$ holds immediately after (by similar reasons that appear in the proof of Argument (1)). Since $\mathit{diffuse}_j(m)$ holds, then $\mathit{diffuse}_\ell(m)$ holds eventually for any non-failing processor $p_\ell\in \sP$ (Claim~\ref{thm:diffuseAll}) and $p_\ell$ invokes $\mathsf{urbDeliver}_\ell(m)$ (Claim~\ref{thm:urbDeliverAll}).
	
	\smallskip
	
	\noindent \textbf{Proof of the quiescence property.~~}
	In the context of self-stabilization~\cite{DBLP:journals/acta/DolevGS99}, quiescence is demonstrated by showing that any call to $\mathsf{urbBroadcast}(m)$ can result only in a finite number of $\mathsf{MSG}(m,\bullet)$ (or corresponding $\mathsf{MSGack}(\bullet)$) messages. Note that the reception of a $\mathsf{MSGack}()$ does not result in the sending of a message and the sending of a $\mathsf{MSGack}()$ is always due to the reception of a $\mathsf{MSG}(m,\bullet)$ message. Moreover, processors that fail eventually, can only send a finite number of messages. Thus, without loss of generality, the proof focuses on the sending of $\mathsf{MSG}(m,\bullet)$ messages by processors that never fail. The non-faulty $p_i \in \sP$ can send $\mathsf{MSG}(m,\bullet)$ message to $p_k \in \sP$ only when $rec=(msg=m,id=j,seq=s,\mathit{delivered},\mathit{recBy}=r,\mathit{prevHB}=e) \in \msgSet_i : (k \notin r \lor (i=j \land s=\txS_i[k]+1)) \land (e<u[k])$ holds (lines~\ref{ln:forEachMSG} and~\ref{ln:sendMSG}), where $u = \mathit{HB}_i$. In other words, once the system reaches $c \in R$ for which $j \in r$ holds and $rec$ is obsolete, $p_i$ stops sending $\mathsf{MSG}(m,\bullet)$ to $p_j$. If this occurs for any $j \in \mathit{Correct}$, the proof is done. Thus, arguments (1) to (3) assume that the system does not reach $c$. 
	
	\smallskip
	
	\noindent Argument (1): \emph{The case of $j \notin \mathit{Correct}$.~~} 
	Within a finite time, $\mathit{HB}_i[j]$ does not increase (the $\mathit{HB}$-completeness property). Thus, the system reaches within a finite time a suffix for which in any state the predicate $(e[j] < u[j])$ (line~\ref{ln:sendMSG}) does not hold, where $u = \mathit{HB}_i$. Thus, $p_i$ does not send $\mathsf{MSG}(m,\bullet)$ messages to $p_j$ during that suffix.
	
	\smallskip
	
	\noindent Argument (2): \emph{The case of $j \in \mathit{Correct}$ and $\neg (i=j \land s=\txS_i[k]+1)$.~~} 
	Suppose, forwards a contradiction, that $p_i$ never stop sending $\mathsf{MSG}(m,\bullet)$ messages to $p_j$ or that $p_j$ never stop sending (the corresponding) $\mathsf{MSGack}(\bullet)$ messages to $p_i$. By Invariant (ii) of Lemma~\ref{thm:basic}, $p_i$ adds $j$ to $r_i:(msg=m,\bullet,\mathit{recBy}=r_i,\bullet) \in \msgSet_i $. By Algorithm~\ref{alg:qurb}'s code, $p_i$ never removes $j$ from $r_i$, and thus, the predicate $j \in r_i$ remains true forever. This contradicts the assumption made in the start of this case, and thus, the quiescence property holds for Algorithm~\ref{alg:qurb}.
	
	\smallskip
	
	\noindent Argument (3): \emph{The case of $j \in \mathit{Correct}$ and $(i=j \land s=\txS_i[k]+1)$.~~} 
	The proof here is by showing that $p_i$ eventually stores a higher value in $\txS_i[k]$ and thus this argument is true due to Argument (2). Assume, towards a contradiction, that $\txS_i[k]=s-1$ always holds. Since the transmission conditions (line~\ref{ln:sendMSGForEach}) always holds, the sender $p_i$ sends $\mathsf{MSG}(m, j,s,\bullet)$ infinitely often to $p_k$. Due to the communication fairness assumption, $p_k$ eventually receives $\mathsf{MSG}(m, j,s,\bullet)$ and makes sure that $\msgSet_k$ stores $m$'s record (line~\ref{ln:sendMSGack}) or that $s \leq \rxS_k[i]$ (line~\ref{ln:updateReturn}). It turns out that even when only the former case holds, eventually the latter case holds. Specifically, by Invariant (ii) of Lemma~\ref{thm:basic} and the proof of the URB-termination property, $\exists rec=(msg=m,id=j,seq=s,\mathit{delivered}=d,\mathit{recBy}=r,\bullet) \in \msgSet_k:d=\true \land \mathit{trusted}_k \subseteq r_k$ and thus $\obsolete(rec)$ holds eventually. But then, it must be that $\rxS_k[i]\geq s$, which in turn implies that $\txS_i[k]\geq s$ (due to lines~\ref{ln:URBsendGossip} and~\ref{ln:urbGOSSIPupdate} and the communication fairness assumption). This is a contradiction with this case assumption that $\txS_i[k]=s-1$. Thus, the argument and Algorithm~\ref{alg:qurb} is quiescent.
\end{proof}

\section{Extension: FIFO Message Delivery}
\label{sec:fifo}
We discuss an extension for Algorithm~\ref{alg:qurb} for ensuring ``First-in, First-out'' (FIFO) message delivery. This extension is marked by Algorithm~\ref{alg:qurb}'s boxed code lines. Our solution uses a well-known approach, which can be found in~\cite{hadzilacos1994modular,DBLP:books/sp/Raynal18}. For the sake of completeness, we bring the definition of the FIFO-URB abstraction before discussing the implementation details and proof. The abstraction includes the operation $\mathrm{fifoBroadcast}(m)$ and the event $\mathrm{fifoDeliver}(m)$. Definition~\ref{def:fifoURB} requires FIFO-URB-broadcast messages to be delivered by their sending orders (per individual sender).

\begin{definition}[FIFO Uniform Reliable Broadcast~\cite{DBLP:books/sp/Raynal18}]
	\label{def:fifoURB}
	Let $R$ be a system execution. We say that the system demonstrates in $R$ a construction of the FIFO-URB communication abstraction if URB-validity, URB-integrity, and URB-termination requirements are satisfied (Definition~\ref{def:URB}) as well as the following property.   
	\begin{itemize}
		\item \textbf{FIFO message delivery.~~} Suppose that $p_i \in \sP$ takes a step that includes a call to $\mathrm{fifoBroadcast}(m)$ and calling to $\mathrm{fifoBroadcast}(m')$ (possibly in another step). No $p_j \in \sP$ raises the event $\mathrm{fifoDeliver}(m')$ before taking raising the event $\mathrm{fifoDeliver}(m)$  (possibly in another step).
	\end{itemize}
\end{definition}

Our solution (Algorithm~\ref{alg:qurb} including the boxed code lines) considers a well-known approach for ensuring the FIFO message delivery, which can be found in~\cite{hadzilacos1994modular,DBLP:books/sp/Raynal18}. We associate each message arrival with a predicate, cf. the boxed part of the if-statement condition of line~\ref{ln:trustedSubsetC}, which is based on Definition~\ref{def:fifoURB}'s requirement. The predicate uses the array $next[1..n]$ (line~\ref{ln:varNext}), which holds at the $j$-th entry the sequence number of the next-to-be-delivered message from $p_j$ that receiver $p_i$ is allowed to FIFO-deliver. As long the predicate does not hold, the message is buffered by the receiving end. The receiver can then FIFO deliver a pending message as soon as Definition~\ref{def:fifoURB}-based predicate holds. The proposed self-stabilizing solution advances the one in~\cite{hadzilacos1994modular,DBLP:books/sp/Raynal18} with respect to recovery after the occurrence of transient faults, which requires also the use of bounded buffer size.

Lines~\ref{ln:maxS},~\ref{ln:rxSmaxSBuffCapacity},~\ref{ln:URBsendGossip} and~\ref{ln:urbGOSSIPupdate} help to deal with the case in which node $p_i$ is a receiver that  holds at  $next_i[j]$ a sequence number that is higher than the sender's sequence number, $seq_j$. This situation can only occur due to a transient fault and the concern here is that $p_i$ might omit up to $next_i[j]-seq_j$ messages broadcast by $p_j$. Algorithm~\ref{alg:qurb} overcomes such concerns by gossiping to the sender's next-to-be-delivered sequence number, cf. the boxed part of line~\ref{ln:maxS}'s code, which is called in line~\ref{ln:URBsendGossip}. Upon the arrival of such gossip messages from $p_j$ to $p_j$, node $p_j$ can assure that $seq_j$ is not smaller then $next_i[j]$. Moreover, line~\ref{ln:rxSmaxSBuffCapacity} helps $p_i$ to make sure that $next_i[j]$ does not refer to an obsolete message, \ie $next_i[j]\leq \rxS_i[j]$. 

\begin{theorem}
	Algorithm~\ref{alg:qurb} is a self-stabilizing construction of a FIFO uniform reliable broadcast
	communication abstraction in any system in which URB can be built. Moreover, the operation of FIFO uniform reliable broadcast has URB's cost measures. 
\end{theorem}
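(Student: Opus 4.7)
The plan is to lift the URB convergence and closure arguments (Theorems~\ref{thm:recovery} and~\ref{thm:closure}) to the FIFO setting by showing that (a) the new state component $next[1..n]$ stabilizes, and (b) once stable, the boxed guard $s=next[k]$ in line~\ref{ln:trustedSubsetC} enforces Definition~\ref{def:fifoURB}'s FIFO property without endangering URB-termination or quiescence. First I would observe that the boxed code lines only extend Algorithm~\ref{alg:qurb} by (i) maintaining $next[]$, (ii) tightening the delivery guard, (iii) extending $\maxS(k)$ to include $next[k]-1$, and (iv) piggybacking $\maxS(k)$ on the gossip. None of these modifications touches the flow-control and buffer-management invariants proved in Theorem~\ref{thm:recovery}, so consistency of $seq$ and $\msgSet$ (Definition~\ref{def:safeConfig}) still holds within $\bigO(\buffCapacity)$ cycles.

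Next I would prove convergence of $next[]$. The only danger after an arbitrary transient fault is that some corrupted entry $next_i[j]$ exceeds any legitimate sequence number, causing $p_i$ to refuse to FIFO-deliver messages from $p_j$ forever. Two mechanisms handle this. On the sender side, by the boxed addition to $\maxS(j)$ and the gossip in line~\ref{ln:URBsendGossip}, processor $p_i$ repeatedly reports $\maxS_i(j)\ge next_i[j]-1$ to $p_j$, and $p_j$ raises $seq_j$ to at least that value via line~\ref{ln:urbGOSSIPupdate}; this is the same $seq$-recovery argument used in Theorem~\ref{thm:recovery}'s Argument~(3), now applied with $next_i[j]-1$ rather than an arbitrary buffered sequence number. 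On the receiver side, the boxed update $next[k]\gets \max\{next[k],\rxS[k]+1\}$ in line~\ref{ln:rxSmaxSBuffCapacity} guarantees that $next_i[k]\ge \rxS_i[k]+1$ holds after one complete iteration. Combined with Theorem~\ref{thm:recovery}'s invariant $\rxS_i[k]+1\le seq_k+1$ at stabilization, this forces $next_i[k]$ into the range $[\rxS_i[k]+1,\, seq_k+1]$ within $\bigO(\buffCapacity)$ cycles. Thereafter, only $p_i$ itself modifies $next_i[k]$ (via line~\ref{ln:trustedSubsetR} at the instant of FIFO-delivery), so the invariant is closed.

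Once $next[]$ is consistent, the FIFO property is immediate: $p_i$ raises $\mathsf{fifoDeliver}(m)$ with $(id=j,seq=s)$ only when $s=next_i[j]$, and then increments $next_i[j]$ by one; by induction on the sender sequence $s=1,2,\ldots$, messages from $p_j$ are delivered in sending order. URB-validity and URB-integrity are inherited from Theorem~\ref{thm:closure} because the boxed guard only postpones deliveries; URB-termination requires a small extra step, namely showing that if $\mathit{diffuse}_i(m)$ holds for all non-faulty $p_i$ (Claim~\ref{thm:urbDeliverAll}), then the finitely many messages with smaller sequence number from the same sender are also diffused (by Claim~\ref{thm:diffuseAll} applied to each of them) and eventually delivered, so $next_i[j]$ eventually reaches $s$. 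Here the bound $seq_j\le \minTxObstSeq_j()+\buffCapacity$ from Definition~\ref{def:safeConfig}~(iii) ensures the ``backlog'' of smaller-sequence messages is finite and each of them is diffused by the URB machinery.

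The hardest part will be the joint fixed-point argument for $next[]$, $\rxS[]$, and $seq$ after a transient fault: a corrupted $next_i[j]$ may simultaneously block FIFO-delivery, inflate $\maxS_i(j)$, and thereby inflate $seq_j$ via the gossip, and one must verify that this feedback loop is bounded by $\bigO(\buffCapacity)$ cycles rather than diverging. The key observation is monotonicity: $next_i[j]$ only grows when $p_i$ actually delivers, and the number of deliverable messages from $p_j$ is bounded by the flow-control budget $\buffCapacity$ established in Theorem~\ref{thm:recovery}. For the cost measure claim, the boxed code adds only $\bigO(\log n)$ bits of piggybacked payload to each existing gossip message and no new message types, so the $\bigO(n^2)$ message count and cost-$2$ round-trip measure of Theorem~\ref{thm:closure} carry over verbatim.
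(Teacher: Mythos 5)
Your proposal is correct and takes essentially the same route as the paper: convergence is inherited from Theorem~\ref{thm:recovery} with Argument~(3) revised so that the sender's $seq$ also dominates the receivers' $next[k]$ values (via the boxed extension of $\maxS(k)$ and the gossip), and closure is inherited from Theorem~\ref{thm:closure} with the $next[]$-based guard supplying the FIFO order. The paper's own proof is only a brief sketch of this argument; your version fills in the details (stabilization of $next[]$ between $\rxS[k]+1$ and $seq_k+1$, the finite-backlog termination argument) that the paper leaves implicit.
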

\begin{proof}
	Claims~\ref{thm:recoveryFIFO} and~\ref{thm:fifoClosure} demonstrate the proof.
	
	\begin{claim}[Convergence]
		\label{thm:recoveryFIFO}
		Let $R$ be a fair execution of Algorithm~\ref{alg:qurb} that starts in an arbitrary system state. Within $\bigO(1)$ asynchronous cycles, the system reaches a state, $c \in R$, after which a suffix $R'$ of $R$ starts, such that $R'$ is complete with respect to the $\mathsf{fifoBroadcast}$ invocations in $R$. Moreover, $seq$ and $\msgSet$ are consistent in any $c' \in R'$ (Definition~\ref{def:safeConfig}).
	\end{claim}
	\begin{claimProof} 
	The proof is along the same lines as the one of Theorem~\ref{thm:recovery} with a minor revision of Argument (3). There is a need to consider not only the largest sequence number stored in $\msgSet$ for a $p_k$'s message record, but also the value of $next[k]$. The boxed part of line~\ref{ln:maxS}'s code implies the latter case.
	\end{claimProof}
	
	\begin{claim}[Closure]
		\label{thm:fifoClosure}
		Let $R$ be an execution of Algorithm~\ref{alg:qurb} that is not necessarily fair but it is complete with respect to $\mathsf{fifoBroadcast}$ invocations (or $R$ is a suffix of an execution $\mathcal{R}=\mathcal{R}'\circ R$ for which $R$ is complete with respect to the invocation of $\mathsf{fifoBroadcast}$ messages in $\mathcal{R}$, cf. Definition~\ref{def:complete}). Moreover,  $seq$ and $\msgSet$ are consistent in any $c \in R$ (Definition~\ref{def:safeConfig}). Algorithm~\ref{alg:qurb} demonstrates in $R$ a construction of the FIFO uniform reliable broadcast. 
	\end{claim}	
	\begin{claimProof}
		The proof is implied using the sequence numbers in $seq$ and $next[]$ as well as the URB communication abstraction and its properties, which Theorem~\ref{thm:closure} shows.
	\end{claimProof}	
\end{proof}

\section{Bounded Self-stabilizing Uniform Reliable Broadcast}
\label{sec:bounded}
In this section, we explain how to transform our unbounded self-stabilizing URB algorithm to a bounded one. We note the existence of several such techniques, \eg Awerbuch \etal~\cite{DBLP:conf/infocom/AwerbuchPV94}, Dolev \etal~\cite[Section 10]{DBLP:journals/corr/abs-1806-03498} and Georgiou \etal~\cite{DBLP:conf/podc/GeorgiouLS19}. The ideas presented in these papers are along the same lines. They present a transformation that takes a self-stabilizing algorithm for message passing systems that uses unbounded operation indices and transforms it into an algorithm that uses bounded indices. The transformation uses a predefined maximum index value, say, $\mathrm{MAXINT} = 2^{64}-1$, and it has two phases. \textsf{(Phase A)} As soon as $p_i$ discovers an index that is at least $\mathrm{MAXINT}$, it disables new invocations of operations. \textsf{(Phase B)} Once all non-failing processors have finished processing their operations, the transformation uses an agreement-based global restart for initializing all system variables. After the end of the global restart, all operations are enabled. For further details, please see~\cite{DBLP:conf/infocom/AwerbuchPV94,DBLP:journals/corr/abs-1806-03498,DBLP:conf/podc/GeorgiouLS19}. 


\section{Conclusions}
\label{sec:disc}
We showed how non-self-stabilizing algorithms~\cite{DBLP:conf/wdag/AguileraCT97,hadzilacos1994modular,DBLP:books/sp/Raynal18} for (quiescent) uniform reliable broadcast can be transformed into one that can recover after the occurrence of arbitrary transient faults. This requires non-trivial considerations that are imperative for self-stabilizing systems, such as the explicit use of bounded buffers. To that end, we developed a flow-control scheme that allows our URB solution to serve as a basis for explicitly bounding the buffer size at the application layer. The need to have this new scheme shows that currently there no conclusive evidence for the existence of a meta-self-stabilizing scheme that can transfer any (or large family of) non-self-stabilizing algorithm from the textbooks into a self-stabilizing one. We simply need to study one problem at a time (and its non-self-stabilizing state-of-the-art) until we have an algorithmic toolkit that is sufficiently generic.

\bibliography{referances}
\end{document}